\newenvironment{proof}{{\indent  \it Proof:}}{\hfill $\blacksquare$}
\begin{document}
\title{Intelligent Surface Enabled Sensing-Assisted Communication}
\author{
	
	\IEEEauthorblockN{Kaitao Meng\IEEEauthorrefmark{1}, Qingqing Wu\IEEEauthorrefmark{1},  Wen Chen\IEEEauthorrefmark{2}, and Deshi Li\IEEEauthorrefmark{3}}
	
	\IEEEauthorblockA{\IEEEauthorrefmark{1}State Key Laboratory of Internet of Things for Smart City, University of Macau, Macau, China}

	\IEEEauthorblockA{\IEEEauthorrefmark{2}Department of Electronic Engineering, Shanghai Jiao Tong University, Shanghai, China}
	
	\IEEEauthorblockA{\IEEEauthorrefmark{3}Electronic Information School, Wuhan University, Wuhan, China.} 
	
	Emails: \IEEEauthorrefmark{1}\{kaitaomeng, qingqingwu\}@um.edu.mo, \IEEEauthorrefmark{2}wenchen@sjtu.edu.cn
	\IEEEauthorrefmark{3}dsli@whu.edu.cn, 
}
\maketitle

\vspace{-10mm}
\begin{abstract}
	Vehicle-to-everything (V2X) communication is expected to support many promising applications in next-generation wireless networks. The recent development of integrated sensing and communications (ISAC) technology offers new opportunities to meet the stringent sensing and communication (S\&C) requirements in V2X networks. However, considering the relatively small radar cross section (RCS) of the vehicles and the limited transmit power of the road site units (RSUs), the power of echoes may be too weak to achieve effective target detection and tracking. To handle this issue, we propose a novel sensing-assisted communication scheme by employing an intelligent omni-surface (IOS) on the surface of the vehicle. First, a two-phase ISAC protocol, including the S\&C phase and the communication-only phase, was presented to maximize the throughput by jointly optimizing the IOS phase shifts and the sensing duration. Then, we derive a closed-form expression of the achievable rate which achieves a good approximation. Furthermore, a sufficient and necessary condition for the existence of the S\&C phase is derived to provide useful insights for practical system design. Simulation results demonstrate the effectiveness of the proposed sensing-assisted communication scheme in achieving a high throughput with low transmit power requirements.
\end{abstract}   
%
\newtheorem{thm}{\bf Lemma}
\newtheorem{remark}{\bf Remark}
\newtheorem{Pro}{\bf Proposition}
\newtheorem{theorem}{Theorem}
\newtheorem{Assum}{\bf Assumption}
\newtheorem{Cor}{\bf Corollary}

\section{Introduction}

Vehicle-to-everything (V2X) communications are expected to play an important role in next-generation wireless networks to support promising applications, such as autonomous driving and intelligent transportation \cite{Challenges2021Gyawali}. Besides high-quality information transmission, the environment sensing capability of vehicles is also of great importance due to the stringent requirements of localization accuracy and latency in V2X networks \cite{Bayesian2021Yuan}. Fortunately, the recent advances in multiple-input and multiple-output (MIMO) and millimeter-wave (mmWave) technologies offer new opportunities to simultaneously provide high-throughput and low-latency wireless communications, as well as ultra-accurate and high-resolution wireless sensing with the same wireless infrastructure \cite{liu2022integrated, Mao2021Channel}. Following these advantages, the investigation on integrated sensing and communications (ISAC) technology is well underway \cite{Liu2020Joint}.

Through the joint design of sensing and communication (S\&C) on the same infrastructure, ISAC can exploit the integration gain to achieve a flexible trade-off between S\&C \cite{Yuan2021Integrated, meng2022throughput}. Furthermore, the synergy between S\&C offers the potential to achieve the coordination gain in V2X networks, and some useful sensing-assisted communication mechanisms are proposed \cite{liu2020radar, du2021integrated}. For example, the authors in \cite{liu2020radar} proposed a novel extended Kalman filtering (EKF) framework to predict the kinematic parameters of vehicles and allocate the transmit power of the road site units (RSUs) based on S\&C requirements, thus reducing the overhead of the communication beam tracking. In \cite{du2021integrated}, the beamwidth is dynamically changed to exploit the beamforming gain of massive antennas in a more flexible way. However, the echo signals may be too weak to be utilized to detect and track the targets effectively, since the radar cross section (RCS) of the served vehicles in urban environments is generally small and the transmit power of RSUs may be limited. Moreover, high-frequency signals generally suffer from large fading loss when penetrating into vehicles, which seriously degrades the communication performance.

Recently, intelligent omni-surfaces (IOSs) have been proposed as a promising technique to achieve larger S\&C coverage and improved S\&C quality \cite{Xu2022Simultaneously}. Specifically, the signals received at the IOS can be reflected towards the incident side and/or refracted towards the other side of the IOS, thereby achieving a more flexible way to reconfigure wireless channels. Existing works on intelligent surfaces mostly focus on assisting low-mobility users, where the IOSs are generally deployed in fixed locations \cite{Simultaneously2022Mu}. However, the S\&C performance under this deployment strategy may be constrained for high-mobility vehicles due to the severe penetration loss and the limited coverage of IOSs. In a most recent work \cite{Huang2022Transforming}, an intelligent refracting surface (IRS) is deployed on the surface of a high-speed vehicle to aid its data transmission, where a training-based channel estimation technique is adopted for reliable transmission. However, the signaling overhead of channel training in \cite{Huang2022Transforming} would be large for higher positioning requirements, and the potential for communication improvement benefited from sensing needs to be further exploited. 

Based on the above discussion, we propose a novel sensing-assisted communication scheme by deploying an IOS on the surface of vehicles to enhance both the S\&C performance. In the considered system, we present a two-phase ISAC protocol, including the S\&C phase and the communication-only phase, to fulfill effective sensing-assisted communication improvements. During the S\&C phase, the parts of signals are refracted into the vehicle to improve the communication performance while others are reflected towards the RSU to assist the tracking of the vehicle, as shown in Fig.~\ref{figure1a}. During the communication-only phase, the signal power is concentrated on the communication user with the exploitation of the sensing results obtained in the former phase. In this work, the throughput maximization problem is considered by jointly optimizing the IOS phase shifts and the S\&C duration. However, solving this ISAC optimization problem is highly non-trivial since there is no closed-form expression of the achievable rate caused by the inaccurate location information of vehicles. To handle this issue, we derive a closed-form expression and present a sufficient and necessary condition to facilitate problem analysis. The main contributions of this paper can be summarized as follows:
\begin{itemize}
	\item We propose a novel IOS-aided sensing-assisted communication scheme for vehicle communication systems, where the throughput is maximized by jointly optimizing the IOS phase shifts and the S\&C duration. By presenting a two-phase protocol, the state estimation and measurement results during the S\&C phase can be effectively utilized for communication improvement.
	\item We derive a closed-form expression of the achievable rate under uncertain estimation information, based on which, a fundamental trade-off between S\&C is unveiled. We present a sufficient and necessary condition for the existence of the S\&C phase, and the problem can be simplified for the degenerated case.
\end{itemize}
\textit{Notations}: For a matrix ${\bm{X}}$,  ${\bm{X}}^\dag$, ${\bm{X}}^T$, and ${\bm{X}}^H$ respectively denote its conjugate, transpose, and conjugate transpose. 

\section{System Model and Problem Formulation}
As shown in Fig.~\ref{figure1a}, we consider an IOS-aided sensing-assisted system, where one mmWave RSU provides ISAC services for an IOS-mounted mobile vehicle. It is assumed that the RSU employs a general uniform linear array (ULA) with $M_t$ transmit antennas along the $x$-axis, and two perpendicular ULAs with $M_r$ receive antennas along the $x$- and $y$-axis, respectively, as shown in Fig.\ref{figure1a}. The vehicle is assumed to drive along a straight road that is parallel to the $x$-axis to simplify the analysis \cite{liu2020radar}. In the considered system, the motion parameters, e.g., angle, distance, and velocity of the vehicle, can be estimated by analyzing the echo signals. These parameters can be deemed as functions of time $t \in [0, T]$, with $T$ being the maximum service duration. For notational convenience, $T$ is equally divided into several small time slots, indexed by $n \in {\cal{N}} = \{1, \cdots, N\}$, and $N = {T}/{\Delta T}$. It is assumed that the motion parameters keep constant within each time slot $\Delta T$ \cite{Jayaprakasam2017Robust}. $\psi^x_{n}$ and $\psi^z_{n}$ respectively denote the azimuth and elevation angles of the geometric path between the IOS and the RSU. The Doppler frequency and the round-trip delay of echos reflected from the IOS are denoted by $\mu_{n}$ and $\tau_{n}$. 

\subsection{Channel Model}
As shown in Fig.~\ref{figure1a}, the estimated angles of the vehicle relative to the ULAs along the $x$- and $y$-axis are denoted by $\varphi_{n}$ and $\phi_{n}$, respectively, which will be transmitted to the IOS controller for phase shift design.\footnote{The estimated angles are not the azimuth and elevation angles between the IOS and the RSU, and the relationship of these angles is given below (\ref{TransmissionLink}).} A uniform planar array (UPA) is equipped at the IOS. The number of IOS elements equipped on the vehicle is denoted by $L = L_x \times L_y$, where $L_x$ and $L_y$ denote the number of elements along the $x$- and $y$-axis, respectively. The half-wavelength antenna spacing is assumed for the UPA and the ULAs. The channel between the RSU and the IOS follows  free-space pathloss models and the channel power gain from the RSU to the IOS can be expressed as $\beta_{G,n} = \beta_0 d_{n}^{-2}$, where $\beta_0$ is the channel power at the reference distance 1 m, $d_{n}$ represents the distance from the RSU to the IOS during the $n$th time slot. $\bm{H}^{\mathrm{DL}}_{n} \in \mathbb{C}^{M_t \times L}$ is the downlink channel matrix from the RSU to the IOS, given by
\vspace{-1mm}
\begin{equation}\label{TransmissionLink}
	\bm{H}^{\mathrm{DL}}_{n}=\sqrt{\beta_{G,n}} \bm{a}_{\mathrm{I}}\left(\phi_{n}, -\varphi_{n}\right) \bm{a}^T _{\mathrm{R}}\left(\varphi_{n}\right).
	\vspace{-1mm}
\end{equation}
In (\ref{TransmissionLink}), ${\bm{a}}_{\mathrm{R}}(\varphi_{n}) \!=\! [1, \!\cdots\!, e^{ {-j \pi(M_t-1) \cos(\varphi_{n}) }}]^T$, ${\bm{a}}_{\mathrm{I}}(\phi_{n}, \!-\varphi_{n}) 
		= [1, \cdots, e^{ {j \pi(L_{x}-1) \cos(\varphi_{n}) }}]^T \otimes [1, \cdots, e^{ { -j \pi(L_{y}-1) \cos({\phi}_{n}) }}]^T$, and $j$ denotes the imaginary unit. Here, $\cos(\varphi_{n}) \!= \!\sin(\psi^z_{n}) \cos (\psi^x_{n})$, $\cos({\phi}_{n}) \!= \!\sin(\psi^z_{n}) \sin (\psi^x_{n})$, and $\otimes$ denotes the Kronecker product. 
$\bm{H}^{\mathrm{UL},x}_{n}, \bm{H}^{\mathrm{UL},y}_{n} \in \mathbb{C}^{L \times M_r}$ are the uplink channel matrices from the IOS to the receive ULAs along the $x$- and $y$-axis, respectively, given by
\vspace{-2mm}
\begin{equation}
	\bm{H}^{\mathrm{UL,x}}_{n}=\sqrt{\beta_{G,n}}\bm{b}_{\mathrm{R}}\left(\phi_{n}\right) \bm{a}_{\mathrm{I}}^{T}\left(\phi_{n}, -\varphi_{n}\right), 
\end{equation}
\begin{equation} \bm{H}^{\mathrm{UL,y}}_{n}=\sqrt{\beta_{G,n}}\bm{b}_{\mathrm{R}}\left(\varphi_{n}\right) \bm{a}_{\mathrm{I}}^{T}\left(\phi_{n}, -\varphi_{n}\right),
	\vspace{-1.5mm}
\end{equation}
where ${\bm{b}}_{\mathrm{R}}\left(x\right)
= \left[1, \cdots, e^{ {-j \pi\left(M_r-1\right)  \cos(x) }}\right]^T$. Since the IOS-device channel changes much more slowly compared to the RSU-IOS channel, it is practically quasi-static and can be assumed to be an approximately line-of-sight (LoS) channel, given by ${\bm{h}}
	= \sqrt{\beta_{h}} [1, \cdots, e^{ {-j \pi(L_{x}-1)  {\Phi}^u }}]^T  \otimes [1, \cdots, e^{ { -j  \pi(L_{y}-1)  {\Omega}^u }}]^T$,
where ${\Phi}^u \triangleq \sin \left(\psi^{u,z}_{n}\right) \cos \left(\psi^{u,x}_{n}\right)$, ${\Omega}^u \triangleq \sin \left(\psi^{u,z}_{n}\right) \sin \left(\psi^{u,x}_{n}\right)$, and $\beta_{h}$ denotes the power gain of the channel from the IOS to the communication device inside the vehicle. $\psi^{u,x}_{n}$ and $\psi^{u,z}_{n}$ are respectively the azimuth and elevation angles of the geometric path connecting the IOS and the communication device. ${\bm{h}}$ can be obtained by conventional channel estimation methods \cite{Zheng2022Survey}.

\begin{figure}[t]
	\centering
	\includegraphics[width=8cm]{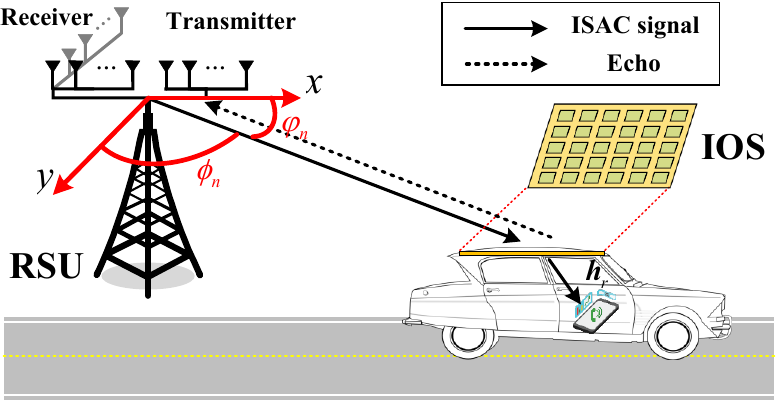}
	\vspace{-3.5mm}
	\caption{Scenarios of IOS-assisted sensing and communication.}
	\label{figure1a}
	\vspace{-2.1mm}
\end{figure}

\subsection{Proposed Sensing-assisted Communication Framework}
\label{SensingAndCommunicationModel}
To effectively balance the S\&C performance and fully exploit the improvement brought by sensing, we design a two-phase sensing-assisted communication scheme to maximize the throughput of the high-mobility user. As shown in Fig.~\ref{figure2}, each time slot is divided into two phases with a time splitting ratio of $\eta$ and $1-\eta$, where both S\&C services are provided in the former phase by splitting the power of the signals incident on the IOS into two different directions, while the latter phase is solely designed for communication by setting the IOS to be a totally refracting state. Here, the power of the refracted and reflected signals has a splitting ratio of $\beta_{l,n}^{\xi,T}$ and  $\beta_{l,n}^{\xi,R}$, where $\xi \in \{S\&C, C\}$ represents the state of the phases, i.e., the S\&C phase and the communication-only phase. Here, $\beta_{l,n}^{\xi,T} + \beta_{l,n}^{\xi,R} = 1$ and $\beta_{l,n}^{\xi,T}, \beta_{l,n}^{\xi,R} \in [0,1]$. Then, during the $\xi$ phase, the refraction- and reflection-coefficient matrices of the IOS are given by ${\bm{\Theta}}^{\xi,T}_{n} = {\rm{diag}}(\sqrt{\beta^{\xi,T}_{1,n}}e^{j \theta^{\xi,T}_{1,n}}, ... , \sqrt{\beta^{\xi,T}_{L,n}}e^{j \theta^{\xi,T}_{L,n}})$ and ${\bm{\Theta}}^{\xi,R}_{n} = {\rm{diag}}(\sqrt{\beta^{\xi,R}_{1,n}}e^{j \theta^{\xi,R}_{1,n}}, ... , \sqrt{\beta^{\xi,R}_{L,n}}e^{j \theta^{\xi,R}_{L,n}})$, where $\theta^{\xi,T}_{l,n}$ and $\theta^{\xi,R}_{l,n} \in [0, 2\pi)$, respectively denote the refracting and reflecting phase shifts of the $l$th IOS element, $l \in {\cal{L}} = \{ 1,\cdots,L\}$. 

\begin{figure}[t]
	\centering
	\includegraphics[width=8.1cm]{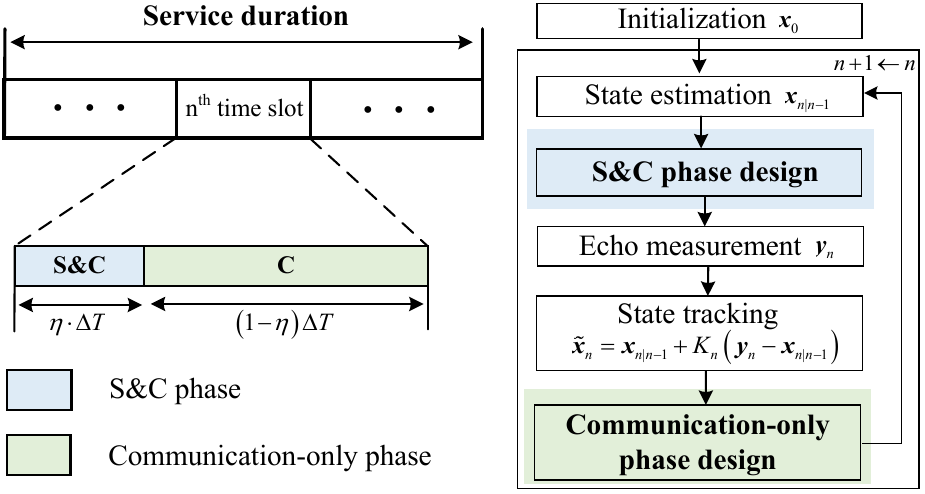}
	\vspace{-3.5mm}
	\caption{Illustration of the proposed two-phase sensing-assisted protocol.}
	\label{figure2}
	\vspace{-1mm}
\end{figure}
As illustrated in Fig.~\ref{figure2}, in the proposed protocol, during the S\&C (communication-only) phase, the RSU beamforming and IOS phase shift vectors are jointly designed based on the state estimation results (state tracking results). Specifically, the state evolution model of the vehicle can be given by \cite{liu2020radar}
\vspace{-1.5mm}
\begin{equation}
	\left\{\begin{array}{l}
		\varphi_{n}=\varphi_{n-1}+d_{n-1}^{-1} v_{n-1} \Delta T \cos \left(\varphi_{n-1}\right)+\omega_{\varphi} \\
		\phi_{n} = \phi_{n-1} + d_{n-1}^{-1}{{{v_{n - 1}}\Delta T}}\cot \left( {{\phi _{n - 1}}} \right) +\omega_{\phi}\\
		d_{n}=d_{n-1}-v_{n-1} \Delta T \sin \left(\varphi_{n-1}\right)+\omega_{d} \\
		v_{n}=v_{n-1}+\omega_{v}
	\end{array}\right. .
\vspace{-1.5mm}
\end{equation}
The covariance matrix of $\omega_{\varphi}$, $\omega_{\phi}$, $\omega_{d}$, and $\omega_{v}$ is ${\bm{Q}}_{\omega} = {\rm{diag}}([\sigma^2_{\omega_{\varphi}}, \sigma^2_{\omega_{\phi}}, \sigma^2_{\omega_{d}}, \sigma^2_{\omega_{v}}])$. Then, the state evaluation model can be expressed as ${\bm{x}}_{n|n-1}=\bm{g}\left({\bm{x}}_{n-1}\right)+\boldsymbol{\omega}_{n}$, where ${\bm{x}}_{n} = [\varphi_{n}, \phi_{n}, d_{n}, v_{n}]$. Then, Kalman filtering is adopted for beam prediction and tracking. By denoting the state estimation variables as ${\bm{x}}_{n|n-1} = [\varphi_{n|n-1}, \phi_{n|n-1}, d_{n|n-1}, v_{n|n-1}]$ and the measured signal vector as ${\bm{y}}_{n} = [\hat \varphi_{n}, \hat \phi_{n}, \hat d_{n}, \hat v_{n}]$, during the $n$th time slot, the measurement model can be formulated as $\bm{y}_{n}={\bm{x}}_{n}+\bm{z}_{n}$, where $\bm{z}_{n} \in \mathcal{C} \mathcal{N}(0, {\bm{Q}}_{z_{n}} )$ and ${\bm{Q}}_{z_{n}} = {\rm{diag}}([\sigma^2_{z_{\varphi_{n}}}, \sigma^2_{z_{\phi_{n}}}, \sigma^2_{z_{d_{n}}}, \sigma^2_{z_{v_{n}}}])$. 
Then, the prediction MSE matrix is given by 
${\bm{M}}_{n \mid n-1} \!=\! {\bm{G}}_{n-1} {\bm{M}}_{n-1} {\bm{G}}_{n-1}^H + {\bm{Q}}_{\omega}$, where $\bm{G}_{n-1}=\left.\frac{\partial \bm{g}}{\partial \bm{x}}\right|_{\bm{x}=\hat{\bm{x}}_{n-1}}$.
Then, the state tracking variable is $\tilde {\bm{x}}_{n} = [\tilde \varphi_{n}, \tilde \phi_{n}, \tilde d_{n}, \tilde v_{n}]$, given by
\vspace{-1.5mm}
\begin{equation}\label{StateTracking}
	\tilde{\bm{x}}_{n}={\bm{x}}_{n \mid n-1}+\bm{K}_{n}\left(\bm{y}_{n}-{\bm{x}}_{n \mid n-1}\right),
	\vspace{-1.5mm}
\end{equation}
where the Kalman gain is $\bm{K}_{n}=\bm{M}_{n \mid n-1} \left(\bm{Q}_z + \bm{M}_{n \mid n-1} \right)^{-1}$. 

\vspace{-1mm}
\subsection{Radar Measurement Model}
\label{RadarMeasurement}
The S\&C performance can be maximized by aligning the beam's direction towards the vehicle based on the estimation and measurement results. Hence, the transmit beamforming vectors during the S\&C phase and the communication-only phase are respectively given by ${\bm{f}}_{n}^{S\&C} = \sqrt{\frac{P^{\max}}{M_t}} {\bm{a}}^\dag_{\mathrm{R}}(\varphi_{n|n-1} ),  \ {\bm{f}}_{n}^{C} = \sqrt{\frac{P^{\max}}{M_t}} {\bm{a}}^\dag_{\mathrm{R}}(\tilde \varphi_{n} )$,
where $\varphi_{n|n-1}$ denotes the estimated angle based on the state evolution model and $\tilde \varphi_{n}$ is the angle of the state tracking (c.f. (\ref{StateTracking})). Similarly, during the S\&C phase, the receive beamforming filter for the ULAs along the $x$- and $y$-axis are respectively given by ${\bm{v}}_{n}^{x} =  \sqrt{\frac{1}{M_r}} {\bm{b}}_{\mathrm{R}}(\varphi_{n|n-1} )$ and ${\bm{v}}_{n}^{y} =  \sqrt{\frac{1}{M_r}} {\bm{b}}_{\mathrm{R}}(\phi_{n|n-1} )$.

At the $n$th time slot, the ULA along the $x$-axis receives the echoes contributed by the IOSs, expressed as ${\bm{r}}^x_{n}(t) = e^{j2\pi \mu_{n} t} {\bm{H}}^{\mathrm{UL},x}_{n} {\bm{\Theta}}_{n}^{S\&C,R} {\bm{H}}^{\mathrm{DL}}_{n} {\bm{f}}_{n}^{S\&C} s_{n}(t - \tau_{n}) + {\bm{z}}_r(t)$, where $s_{n}(t)$ is the ISAC signals with carrier frequency $f_c$, and ${\bm{z}}_r(t)$ denotes the noises at the ULAs. The resolved ranges and velocities in the delay-Doppler domain can be determined by utilizing the standard matched-filtering techniques \cite{liu2020radar}. It is assumed that the scatters can be distinguished in the delay-Doppler domain. Then, for the receive antennas along the $x$-axis, the output of the matched filter is expressed as 
\vspace{-1mm}
\begin{align}\label{MatchFiltering}
	\widetilde{\bm{r}}^x_{n} =&  \sqrt{W} ({\bm{v}}^x_{n})^{H}{\bm{H}}^{\mathrm{UL}}_{n} {\bm{\Theta}}_{n}^{S\&C,R} {\bm{H}}^{\mathrm{DL}}_{n} {\bm{f}}_{n}^{S\&C}{\delta}\left(\tau-\tau_{n}, \mu-\mu_{n}\right) \nonumber \\ &+\widetilde{\bm{z}}_{r},
\end{align}
where ${\delta}\left(\tau-\tau_{n} , \mu-\mu_{n}\right)$ is the normalized matched-filtering output function obtained by time and frequency reversing and conjugating its own waveform for the complex transmit signal $s_{n}(t)$. In (\ref{MatchFiltering}), $\widetilde{\bm{z}}_{r}$ denotes the normalized measurement noise, and $W$ denotes the matched-filtering gain. The angle $\varphi_{n}$ can be readily measured by super-resolution algorithms like multiple signal classification (MUSIC) \cite{schmidt1986multiple}. As analyzed in Section \ref{SensingAndCommunicationModel}, the AoA estimation result $\hat \varphi_{n} = \varphi_{n} + z_{\varphi_{n}}$ and $\hat \phi_{n} = \phi_{n} + z_{\phi_{n}}$, where $z_{\varphi_{n}}$ and $z_{\phi_{n}}$ represents the angle estimation errors, $ z_{\varphi_{n}} \in \mathcal{C} \mathcal{N}(0, \sigma_{z_{\varphi_{n}}}^{2} )$ and $ z_{\phi_{n}} \in \mathcal{C} \mathcal{N}(0, \sigma_{z_{\phi_{n}}}^{2} )$. The estimation error is inversely proportional to the received echo's power at the RSU \cite{Liu2022Survey}, i.e.,
\vspace{-1mm}
\begin{equation}\label{ReceivedPowerEquation}
	\sigma_{z_{\varphi_{n}}}^{2} \propto {({\gamma^{S,x}_{n}{{\sin }^2}\varphi_{n} })^{-1}}, \ \sigma_{z_{\phi_{n}}}^{2} \propto ({{\gamma^{S,y}_{n}{{\sin }^2}\phi_{n} }})^{-1},
	\vspace{-1mm}
\end{equation}
where ${\gamma^{S,x}_{n}}$ and ${\gamma^{S,y}_{n}}$ respectively denotes the signal-to-noise ratio (SNR) at the receive ULAs of the RSU along the $x$- and $y$-axis after match-filtering. Due to the uncertainty of the estimated information, the SNR of the received echos at the $x$-axis ULA is given in expectation form, i.e.,
\vspace{-1mm}
\begin{equation}\label{SensingReceivedPower}
	\begin{aligned}
			\gamma^{S,x}_{n} = &\frac{\eta \Delta T {\beta^2_{G,n}}}{\Delta t \sigma_s^2}  \mathbb{E}[ | {({\bm{v}}^x_{n})^{H}	{\bm{b}}_{\mathrm{RSU}}(\varphi_{n})}|^2 |\bm{a}^{T}_{\mathrm{IOS}}(\phi_{n},-\varphi_{n})    \\ &{\bm{\Theta}}^{S\&C,R}_{n} \bm{a}_{\mathrm{IOS}}(\phi_{n},-\varphi_{n} )|^2 |{{\bm{a}}_{\mathrm{RSU}}^T(\varphi_{n}){\bm{f}}_{n}^{S\&C}}|^{2} ],
		\vspace{-1mm}
	\end{aligned}
\end{equation}
where $\sigma_s^2$ is the noise power at the receive ULAs of the RSU, $\Delta t$ is the duration of one symbol, $\frac{\eta \Delta T}{\Delta t}$ represents the number of symbols used for filtering during the S\&C phase. 

\subsection{Communication Model}
The communication device inside each vehicle mainly receives signals via the RSU-IOS-device link since other non-line-of-sight (NLOS) links between the RSU and the communication device are practically assumed to be negligible due to severe penetration loss, especially for mmWave signals. During the S\&C phase, the signal received at the communication device can be given by $y^{S\&C}_{n}(t) =   {\bm{h}}^T{\bm{\Theta}}_{n}^{S\&C,T}\bm{H}^{\mathrm{DL}}_{n} {\bm{f}}_{n}^{S\&C} s_{n}(t) +  z(t)$, ${{z}}(t)$ denotes the noise at its receive antenna. The signal-to-noise ratio (SNR) of the user during the S\&C phase can be denoted by $\gamma^{S\&C}_{n}$, and the corresponding achievable rate is given by
\vspace{-1mm}
\begin{equation} 
	\begin{aligned}
		R^{ {S\&C }}_{n} &= \mathbb{E}\left[ \log _{2}(1+ \gamma^{S\&C}_{n})\right]  \\
		&\stackrel{(a)}{\leqslant} \log _{2}\left(1+\frac{  1}{\sigma_{c}^{2}}\mathbb{E}\left[ \left|{\bm{h}}^T{\bm{\Theta}}^{S\&C,T}_{n} {\bm{H}}^{\mathrm{DL}}_{n} {\bm{f}}_{n}^{C}\right|^{2}\right]\right) ,
		\vspace{-1mm}
	\end{aligned}
\end{equation}
where ($a$) holds based on Jensen's inequality, i.e., $\mathbb{E}\left[ \log(x) \right] \le \log\left(\mathbb{E}[x] \right)$. During the communication-only phase, the signal received at the communication device can be given by $y^C_{n}(t) =   {\bm{h}}^T{\bm{\Theta}}_{n}^{C,T}\bm{H}^{\mathrm{DL}}_{n} {\bm{f}}_{n}^C s_{n}(t) + z(t)$, where ${\bm{f}}_{n}^C$ denotes the beamforming vector of the RSU during the communication-only phase at the $n$th time slot. Similarly, the corresponding achievable rate is given by
\vspace{-1mm}
\begin{equation}
	R^C_{n} \!=\!\mathbb{E}[ \log _{2}(1+\gamma^C_{n})]  \!\le \! \log _{2}\!\left(\!1\!+\frac{1}{ \sigma_{c}^{2}} \mathbb{E}[|{\bm{h}}^T{\bm{\Theta}}_{n}^{C,T}{\bm{H}}^{\mathrm{DL}}_{n} {\bm{f}}_{n}^{C}|^{2}]\!\right)\!,
	\vspace{-1mm}
\end{equation}
where $\gamma^{C}$ denotes the SNR of the communication device. Then, during the $n$th time slot, the average achievable rate can be given by 
\vspace{-1mm}
\begin{equation}
	R_{n} = \eta R_{n}^{S \& C} + \left( 1 - \eta  \right) R_{n}^C.
	\vspace{-1mm}
\end{equation}
\subsection{Problem Formulation}

During each time slot, we aim to maximize the minimum throughput among the communication devices in the considered system by jointly optimizing the phase shift matrices of IOSs and the S\&C duration ratio. 
\vspace{-1mm}
\begin{alignat}{2}
	\label{P1}
	(\rm{P1}): \quad & \begin{array}{*{20}{c}}
		\mathop {\max }\limits_{{\bm{\Theta}}^{\xi,R}_n, {\bm{\Theta}}^{\xi, T}_n, \eta} \quad  R_n
	\end{array} & \\ 
	\mbox{s.t.}\quad
	& \theta^{\xi,T}_{l,n}, \theta^{\xi,R}_{l,n} \in [0, 2 \pi), \forall l, \xi, & \tag{\ref{P1}a}\\
	& \beta^{\xi,T}_{l,n}, \beta^{\xi,R}_{l,n} \in [0, 1], \beta^{\xi,T}_{l,n} + \beta^{\xi,R}_{l,n}  = 1, \forall  l, \xi,  & \tag{\ref{P1}b}\\
	& \eta \in [0, 1]. & \tag{\ref{P1}c}
	\vspace{-1mm}
\end{alignat}
Solving (P1) optimally is highly non-trivial due to the lack of closed-form objective function and closely coupled variables. To tackle this issue, we first derive a closed-form achievable rate achieving a good approximation, and then present some useful insights for the proposed scheme.

\section{Sensing-assisted Communication}
\label{SingleVehicle}

In this section, the optimal phase shift is given first, based on which, a closed-form expression of the achievable rate is derived.
\begin{thm}\label{OptimalPhaseShift}
	At the optimal solution of (P1), during the S\&C phase, the reflecting and refracting phase shifts, $\theta^{R}_{(l_x-1)L_y+l_y}$ and $\theta^{T}_{(l_x-1)L_y+l_y}$, of the $l_x$th-row and $l_y$th-column element of the IOS is obtained by
	\vspace{-1mm}
	\begin{equation}
		\begin{aligned}
			&\theta^{R}_{(l_x-1)L_y+l_y}=\pi(l_x-1) q^{R}_{{x}}+\pi(l_y-1) q^{R}_{{y}}+\theta_{0}, \\
			&\theta^{T}_{(l_x-1)L_y+l_y}=\pi(l_y-1) q^{T}_{{x}}+\pi(l_y-1) q^{T}_{{y}}+\theta_{0}, 
			\vspace{-1mm}
		\end{aligned}
	\end{equation}
	where {$\theta_{0}$} is the reference phase at the origin of the coordinates, $q^{R}_{{x}} =  - 2 \cos(\varphi_{n|n-1}),\  q^{R}_{{y}} =  2 \cos(\phi_{n|n-1})$, $q^{T}_{{x}} = - \cos(\varphi_{n|n-1}) + \sin ( \psi^{u,z}_{n}) \cos( \psi^{u,x}_{n})$, $q^{T}_{{y}} = \cos(\phi_{n|n-1}) + \sin ( \psi^{u,z}_{n}) \sin( \psi^{u,x}_{n})$. 
\end{thm}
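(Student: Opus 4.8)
The plan is to show that the stated phase-shift profile simultaneously maximizes both the sensing SNR $\gamma^{S,x}_n$ (hence minimizes the estimation error in \eqref{ReceivedPowerEquation}) and the communication rate $R^{S\&C}_n$, so that it must be optimal for (P1) regardless of how the throughput weights the two. First I would fix attention on the quadratic forms appearing in \eqref{SensingReceivedPower} and in $R^{S\&C}_n$: the refraction term $|{\bm{h}}^T{\bm{\Theta}}^{S\&C,T}_n {\bm{H}}^{\mathrm{DL}}_n {\bm{f}}_n^{C}|^2$ and the reflection term $|\bm{a}^{T}_{\mathrm{IOS}}(\phi_n,-\varphi_n){\bm{\Theta}}^{S\&C,R}_n \bm{a}_{\mathrm{IOS}}(\phi_n,-\varphi_n)|^2$. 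Substituting the array-response expressions from \eqref{TransmissionLink} and the definition of ${\bm{h}}$, each of these becomes a sum $\big|\sum_{l_x,l_y}\sqrt{\beta_{l,n}}\,e^{j(\theta_{(l_x-1)L_y+l_y}+\text{phase}_{l_x,l_y})}\big|^2$ where $\text{phase}_{l_x,l_y}$ is linear in $(l_x-1)$ and $(l_y-1)$ with slopes determined by the relevant angles. By the triangle inequality this sum is bounded by $\big(\sum_{l_x,l_y}\sqrt{\beta_{l,n}}\big)^2$, with equality iff the total phase of every term is a common constant $\theta_0$; solving $\theta_{(l_x-1)L_y+l_y}+\text{phase}_{l_x,l_y}=\theta_0$ yields exactly the linear profiles claimed, with $q^R_x,q^R_y$ coming from the round-trip (reflection) geometry $-2\cos\varphi_{n|n-1},\,2\cos\phi_{n|n-1}$ and $q^T_x,q^T_y$ from the one-way cascade RSU$\to$IOS$\to$device, i.e.\ $-\cos\varphi_{n|n-1}+\sin\psi^{u,z}_n\cos\psi^{u,x}_n$ and $\cos\phi_{n|n-1}+\sin\psi^{u,z}_n\sin\psi^{u,x}_n$.

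Next I would handle the amplitude-splitting variables. For the reflection path only the reflected power matters and for the refraction path only the refracted power, so after the phase alignment the sensing SNR is monotone increasing in each $\beta^{S\&C,R}_{l,n}$ and $R^{S\&C}_n$ is monotone increasing in each $\beta^{S\&C,T}_{l,n}$; since $\beta^{S\&C,T}_{l,n}+\beta^{S\&C,R}_{l,n}=1$, these are in tension, but crucially the phase choice decouples from the amplitude choice — the optimal phases above are the unique maximizers of the two quadratic forms for \emph{any} fixed feasible $\{\beta_{l,n}\}$. Hence at any optimal solution of (P1) the phases must take the stated form (otherwise one could strictly increase both $\gamma^{S,x}_n$ and $R^{S\&C}_n$, hence $R_n$, contradicting optimality), which is all the lemma asserts; the split ratios and $\eta$ are left to the subsequent closed-form analysis. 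I would also note the receive-beamforming alignment ${\bm{v}}^x_n,{\bm{v}}^y_n$ toward $\varphi_{n|n-1},\phi_{n|n-1}$ makes the $|({\bm{v}}^x_n)^H{\bm{b}}_{\mathrm{RSU}}(\varphi_n)|^2$ and $|{\bm{a}}^T_{\mathrm{RSU}}(\varphi_n){\bm{f}}^{S\&C}_n|^2$ factors concentrate (in expectation) at their peak values near the predicted angle, so the only remaining optimization handle inside the expectations is indeed the IOS matrix, justifying the reduction.

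The main obstacle I expect is dealing rigorously with the expectation operators $\mathbb{E}[\cdot]$ in \eqref{SensingReceivedPower} and in the bound on $R^{S\&C}_n$: the true angles $\varphi_n,\phi_n$ are random (Gaussian perturbations of the Kalman predictions), while the phase shifts can only depend on the predicted values $\varphi_{n|n-1},\phi_{n|n-1}$, so the inner products are not exactly unimodular-aligned for every realization. The argument to push through is that the integrand $|\sum_{l_x,l_y}\sqrt{\beta_{l,n}}e^{j(\cdots)}|^2$, viewed as a function of the design phases, is maximized \emph{pointwise} in the randomness precisely when the phases are matched to the prediction — more carefully, one shows the expectation is maximized by the deterministic matched profile because the perturbation distribution is symmetric about the prediction and the quadratic form is a positive-definite even function of the residual angle error, so no data-independent phase choice can outperform the matched one. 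Alternatively, and more in the spirit of the paper, one treats the predicted angles as the effective channel parameters (the ``estimated channel'' convention) and argues the matched filter/phase profile is optimal for this nominal channel, which is the standard robust-beamforming heuristic the closed-form derivation then quantifies. Modulo that modeling choice, the rest is the routine phase-cancellation calculation sketched above.
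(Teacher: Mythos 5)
Your proposal is correct and follows essentially the same route as the paper's proof: since the reflecting phases enter only the sensing SNR and the refracting phases only the communication term, each reduces to maximizing a passive beamforming gain of the form $\bigl|\sum_{l_x,l_y}\sqrt{\beta_{l,n}}\,e^{j(\theta+\text{linear phase})}\bigr|^2$, which the triangle-inequality (phase-alignment) argument solves, giving exactly the stated linear profiles matched to $\varphi_{n|n-1},\phi_{n|n-1}$ for the round-trip reflect path and to the RSU--IOS--device cascade for the refract path. Your additional discussion of the expectation/mismatch issue (the phases can only depend on the predictions while the true angles are random) actually goes beyond the paper's own proof, which simply aligns to the nominal predicted channel, and your symmetry/matched-profile justification is a sound way to close that gap.
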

\begin{proof}
	Due to page limitation, we put the corresponding proof in \cite{Myproof}, and the proofs of all subsequent lemmas and propositions in this work are put in \cite{Myproof}. 
\end{proof}

Similar to Lemma \ref{OptimalPhaseShift}, the optimal phase gradients $q_{{x}}$ and $q_{{y}}$ during the communication-only phase can be given by $q^T_{{x}} =  -\cos(\tilde \varphi_{n}) + \sin ( \psi^{u,z}_{n}) \cos( \psi^{u,x}_{n})$ and $q^T_{{y}} =  \cos(\tilde \phi_{n}) +  \sin ( \psi^{u,z}_{n}) \sin( \psi^{u,x}_{n})$.

\begin{thm}\label{EqualPowerSplitting}
	At the optimal solution of (P1), $\beta^{\xi, R*}_{l,n} = \beta^{\xi, R*}_{l',n}$, $\beta^{\xi,T*}_{l,n} = \beta^{\xi,T*}_{l',n}$, where $\xi \in \{S\&C, C\}$ and $l, l' \in {\cal{L}}$. 
\end{thm}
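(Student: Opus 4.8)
The plan is to use \emph{Lemma~\ref{OptimalPhaseShift}} to reduce the dependence of the objective on the per-element amplitude coefficients $\{\beta_{l,n}^{S\&C,T},\beta_{l,n}^{S\&C,R}\}_{l\in\mathcal{L}}$ to just two aggregate scalars, and then to invoke strict concavity of the square root. First I would substitute the optimal reflecting and refracting phase shifts of \emph{Lemma~\ref{OptimalPhaseShift}} into $R_n^{S\&C}$ and into the sensing SNRs $\gamma_n^{S,x},\gamma_n^{S,y}$. With those phases the per-element contributions combine coherently, so the refracted beamforming power takes the form $\mathbb{E}[|\bm{h}^T\bm{\Theta}_n^{S\&C,T}\bm{H}_n^{\mathrm{DL}}\bm{f}_n^{C}|^2]=c_T\,(\sum_{l\in\mathcal{L}}\sqrt{\beta_{l,n}^{S\&C,T}})^2$, and likewise $\gamma_n^{S,x},\gamma_n^{S,y}\propto(\sum_{l\in\mathcal{L}}\sqrt{\beta_{l,n}^{S\&C,R}})^2$, where the constants carry $\beta_{G,n},P^{\max},M_t,\beta_h$ and the averaging over the estimation error but not the IOS amplitudes. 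Writing $A_T\triangleq\sum_{l\in\mathcal{L}}\sqrt{\beta_{l,n}^{S\&C,T}}$ and $A_R\triangleq\sum_{l\in\mathcal{L}}\sqrt{\beta_{l,n}^{S\&C,R}}$, the objective $R_n$ then depends on the S\&C-phase amplitudes only through a function $\Psi(A_T,A_R)$ that is \emph{strictly increasing} in $A_T$ (since $R_n^{S\&C}$ is an increasing logarithm of $A_T^2$) and \emph{non-decreasing} in $A_R$ (a larger sensing SNR shrinks the measurement-noise variances $\sigma^2_{z_{\varphi_n}},\sigma^2_{z_{\phi_n}}\propto A_R^{-2}$, which through the Kalman update can only improve the beam alignment used in the communication-only phase and hence $R_n^{C}$).

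Next I would fix the refraction power budget $B\triangleq\sum_{l\in\mathcal{L}}\beta_{l,n}^{S\&C,T}\in[0,L]$, so that $\sum_{l\in\mathcal{L}}\beta_{l,n}^{S\&C,R}=L-B$ by the constraint $\beta_{l,n}^{S\&C,T}+\beta_{l,n}^{S\&C,R}=1$. By Cauchy--Schwarz (equivalently, concavity of $t\mapsto\sqrt{t}$), $A_T\le\sqrt{LB}$ with equality iff $\beta_{l,n}^{S\&C,T}=B/L$ for all $l$, and $A_R\le\sqrt{L(L-B)}$ with equality iff $\beta_{l,n}^{S\&C,R}=(L-B)/L$ for all $l$; crucially, these two equality conditions coincide. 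Hence, starting from \emph{any} feasible amplitude profile with total $B$ and replacing it by the uniform one $\beta_{l,n}^{S\&C,T}\equiv B/L$ (so $\beta_{l,n}^{S\&C,R}\equiv 1-B/L$) weakly increases both $A_T$ and $A_R$, so it does not decrease $\Psi$; and if the original profile is not uniform the increase in $A_T$ is strict, so $\Psi$ strictly increases, contradicting optimality. Therefore at every optimum of (P1) the S\&C-phase power splitting is constant across elements, i.e. $\beta_{l,n}^{S\&C,R*}=\beta_{l',n}^{S\&C,R*}$ and $\beta_{l,n}^{S\&C,T*}=\beta_{l',n}^{S\&C,T*}$. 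For $\xi=C$ the claim is immediate: in the communication-only phase the IOS is in a totally refracting state, so the reflection carries no signal and the same concavity argument --- now with no competing sensing term --- forces $\beta_{l,n}^{C,T*}=1$ and $\beta_{l,n}^{C,R*}=0$ for all $l$, which is trivially constant in $l$.

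The only ingredient that is not an elementary Jensen/Schur-concavity manipulation is the monotonicity ``larger $A_R$ implies weakly larger $R_n^{C}$'', and that is where I expect the real work. It rests on two facts: (i) the Kalman prediction/update is monotone in the measurement-noise covariance, i.e.\ decreasing $\bm{Q}_{z_n}$ in the Loewner order decreases the tracking MSE matrix $\bm{M}_n$ and hence the variances of $\tilde\varphi_n$ and $\tilde\phi_n$ --- a standard property of the Riccati recursion; and (ii) $\mathbb{E}[|\bm{h}^T\bm{\Theta}_n^{C,T}\bm{H}_n^{\mathrm{DL}}\bm{f}_n^{C}|^2]$, with the refracting phases matched to the tracked angles as in \emph{Lemma~\ref{OptimalPhaseShift}}, is non-increasing in those angle-tracking variances, because it is an array-gain expectation of the form $\mathbb{E}[\,|\sum_m(\cdot)|^2\,]$ whose value contracts as the angular errors spread. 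Granting (i)--(ii), $\Psi$ has the claimed monotonicity and the proof closes as above. I would also note that the argument is entirely per time slot and uses nothing about the value of $\eta$, so it holds for the joint optimum over $(\bm{\Theta}^{\xi,R}_n,\bm{\Theta}^{\xi,T}_n,\eta)$.
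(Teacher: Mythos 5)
Your proposal is correct and follows essentially the same route as the paper's proof: invoke Lemma~\ref{OptimalPhaseShift} so that the objective depends on the amplitudes only through the coherent sums $\bigl(\sum_{l}\sqrt{\beta^{\xi,T}_{l,n}}\bigr)^2$ and $\bigl(\sum_{l}\sqrt{\beta^{\xi,R}_{l,n}}\bigr)^2$, fix the total power budget, and apply the Cauchy--Schwarz (concavity of $\sqrt{\cdot}$) argument, whose equality condition forces uniform splitting for both the refracted and reflected coefficients simultaneously. You are somewhat more explicit than the paper (which leaves implicit the monotonicity of the rate in the sensing coherent gain via the Kalman update), but the core argument is the same.
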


According to Lemma \ref{EqualPowerSplitting}, all IOS elements share the same refracted and reflected power splitting ratio, denoted by $\beta^{\xi,T}$ and $\beta^{\xi,R}$. In particular, during the communication-only phase, $\beta^{C, R} = 0$ and $\beta^{C, T} = 1$. For notational simplification, $\beta^{S\&C, R}$ and $\beta^{S\&C, T}$ are noted by $\beta^R$ and $\beta^{T}$. Then, $|{{\bm{a}}^H_{\mathrm{R}}( \varphi_n){\bm{f}}_{n}^{S\&C}}|^2$ in (\ref{SensingReceivedPower}) can be expressed as
\vspace{-1mm}
	\begin{align}
		|{{\bm{a}}^H_{\mathrm{R}}( \varphi_n){\bm{f}}_{n}^{S\&C}}|^2 =\! {P^{\max}} F_{M_t}( {\cos (\varphi_{n|n-1} ) \!-\! \cos ( \varphi_n  )} ) ,
		\vspace{-1mm}
	\end{align}
where the Fej$\acute{e}$r kernel $F_{M_t}(x) = \frac{1}{M_t}\left(\frac{\sin \frac{M_t \pi x}{2 }}{\sin \frac{\pi x}{2 }}\right)^{2}$ \cite{rust2013convergence}. Similarly, the receive beamforming gain in (\ref{SensingReceivedPower}) can be given by $\left|{\bm{v}}_{n}^{H} {{\bm{b}}_{\mathrm{R}}\left( \varphi_n\right)}\right|^2  = F_{M_r}( {\cos (\varphi_{n|n-1} ) - \cos ( \varphi_n  )} ) $. Based on Lemmas \ref{OptimalPhaseShift} and \ref{EqualPowerSplitting}, the passive beamforming gain achieved by the IOS can be given by
\begin{align}
	&\left|\bm{a}^{T}_{\mathrm{I}}\left(\phi_n, -\varphi_n\right) {\bm{\Theta}} ^{S\&C, T}_n\bm{a}_{\mathrm{I}}\left(\phi_n, -\varphi_n\right)\right|^2 \nonumber \\
 \buildrel \Delta \over = &  \beta^{R} L_x  L_y F_{L_x}\left( 2\Delta \cos \varphi_n\right) F_{L_y}\left( 2\Delta \cos \phi_n\right) ,
\end{align}
where $\beta^{R}$ represents the common reflected power ratio of all IOS elements, $\Delta \cos \varphi_n =  \cos \left(\varphi_{n|n-1} \right) - \cos \left( \varphi_n  \right)$ and $\Delta \cos \phi_n =  \cos \left(\phi_{n|n-1} \right) - \cos \left( \phi_n  \right)$. Since $\varphi _{n|n-1}$ and $\phi _{n|n-1}$ are independent of each other, we could analyze the received power at two ULAs separately. At the ULA along the $x$-axis, the SNR of the received echos can be given by
\begin{align}\label{SensingPowerAverage}
	\gamma^{S,x}_n = & \beta^{R}\eta \frac{ L  \Delta T {P^{\max}} {\beta^2_{G,n}}}{\Delta t  {\sigma_s^2} } \mathbb{E}\left[F_{L_y}\left( 2\Delta \cos \phi_n\right)  \right]  \\
	&\mathbb{E}\left[ F_{L_x}\left( 2\Delta \cos \varphi_n\right)F_{M_t}\left(\Delta \cos \varphi_n\right) F_{M_r}\left(\Delta \cos \varphi_n\right)\right]   .  \nonumber 
\end{align}
Under the perfect angle information, i.e., $\Delta \cos \varphi_n = \Delta \cos \phi_n = 0$, a power gain of order $M_t M_r L^2$ can be achieved. However, $\gamma^{S,x}_n$ in (\ref{SensingPowerAverage}) involves the expectation step, which makes it challenging to solve (P1) optimally. 

\begin{Pro}\label{InftyBand}
When $L_x \to \infty$ and $L_y \to \infty$, the SNR of the received echo can be approximated as 
\begin{equation}\label{ClosedFormSensingPower}
	\gamma^{S}_n \!\approx \! \beta^{R}\eta \frac{  \Delta T {P^{\max}}  \beta_{G,n}^2 L  M_t M_r h(\varphi_n, \sigma^2_{\omega_\varphi}) h(\phi_n, \sigma^2_{\omega_\phi}) }{{\Delta t \sigma_s^2} } ,
\end{equation}
where $h(x, y) \buildrel \Delta \over= \sum\limits_{k = -\infty}^{\infty} \frac{1}{{\sqrt {2\pi {y}} } |\sin{x}|}\!\left(\! {{e^{ - \frac{{{{2 {k^2\pi^2} }}}}{{{y}}}}} \!+\! {e^{ - \frac{{{{2\left( {(k + 1)\pi  } - x \right)}^2}}}{{{y}}}}}} \!\right)\!$.
\end{Pro}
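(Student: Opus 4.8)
The plan is to start from the exact expression (\ref{SensingPowerAverage}) and evaluate its two expectations separately in the large-surface regime, using that the Fej\'er kernel behaves as a periodic approximate identity. First I would fix the statistics of the prediction errors. From the state evolution model, $\varphi_n$ is the deterministic one-step propagation of $\varphi_{n-1}$ plus the process noise $\omega_\varphi$, whereas $\varphi_{n|n-1}$ is the same propagation applied to $\hat\varphi_{n-1}$; when the previous estimate is accurate the propagated estimation error is negligible against $\omega_\varphi$, so the prediction error $\epsilon_\varphi\triangleq\varphi_{n|n-1}-\varphi_n$ is approximately zero-mean Gaussian with variance $\sigma^2_{\omega_\varphi}$ and density $p_\varphi(\epsilon)=\frac{1}{\sqrt{2\pi\sigma^2_{\omega_\varphi}}}e^{-\epsilon^2/(2\sigma^2_{\omega_\varphi})}$, and likewise $\epsilon_\phi\triangleq\phi_{n|n-1}-\phi_n$ is zero-mean Gaussian with variance $\sigma^2_{\omega_\phi}$, independent of $\epsilon_\varphi$. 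Each expectation in (\ref{SensingPowerAverage}) then reduces to a one-dimensional integral against $p_\varphi$ or $p_\phi$.

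For the $\varphi$-integral, the plan is to use that $F_L(t)=\frac1L\big(\frac{\sin(L\pi t/2)}{\sin(\pi t/2)}\big)^2$ is $2$-periodic with $\int_{-1}^{1}F_L(t)\,dt=2$, so that $F_{L_x}(2u)\to\sum_{k\in\mathbb Z}\delta(u-k)$ as $L_x\to\infty$ in the distributional sense \cite{rust2013convergence}. Writing $\Delta\cos\varphi_n=g(\epsilon_\varphi)\triangleq\cos(\varphi_n+\epsilon_\varphi)-\cos\varphi_n$, the only atom of this comb that lies in the range of $g$ and is not exponentially suppressed by $p_\varphi$ is $k=0$, whose preimages are $g(\epsilon_\varphi)=0$, i.e.\ $\epsilon_\varphi\in\{2m\pi\}_{m\in\mathbb Z}\cup\{2m\pi-2\varphi_n\}_{m\in\mathbb Z}$ (the $k\neq0$ atoms require $|\epsilon_\varphi|$ of order one and are discarded). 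At every such root $|g'(\epsilon_\varphi)|=|\sin(\varphi_n+\epsilon_\varphi)|=|\sin\varphi_n|$ and, crucially, $\Delta\cos\varphi_n=0$ there, so $F_{M_t}(\Delta\cos\varphi_n)F_{M_r}(\Delta\cos\varphi_n)=F_{M_t}(0)F_{M_r}(0)=M_tM_r$. The delta-composition rule then gives
\begin{align}
	&\mathbb{E}\big[F_{L_x}(2\Delta\cos\varphi_n)F_{M_t}(\Delta\cos\varphi_n)F_{M_r}(\Delta\cos\varphi_n)\big]\nonumber\\
	&\qquad\to\frac{M_tM_r}{|\sin\varphi_n|}\sum_{m\in\mathbb Z}\big(p_\varphi(2m\pi)+p_\varphi(2m\pi-2\varphi_n)\big),
\end{align}
and substituting the Gaussian density, using $(2m\pi)^2/(2\sigma^2_{\omega_\varphi})=2m^2\pi^2/\sigma^2_{\omega_\varphi}$ and $(2m\pi-2\varphi_n)^2/(2\sigma^2_{\omega_\varphi})=2(m\pi-\varphi_n)^2/\sigma^2_{\omega_\varphi}$ and reindexing $m\mapsto m+1$ in the second sum, the right-hand side is exactly $M_tM_r\,h(\varphi_n,\sigma^2_{\omega_\varphi})$. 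The $\phi$-integral is handled identically but without the $F_{M_t},F_{M_r}$ factors, giving $\mathbb{E}[F_{L_y}(2\Delta\cos\phi_n)]\to h(\phi_n,\sigma^2_{\omega_\phi})$ as $L_y\to\infty$. Substituting both limits into (\ref{SensingPowerAverage}) and collecting constants yields (\ref{ClosedFormSensingPower}); the $y$-axis ULA SNR $\gamma^{S,y}_n$ follows from the same computation with $F_{M_r}$ reassigned to the $\phi$-integral, which leaves the product $M_tM_r\,h(\varphi_n,\sigma^2_{\omega_\varphi})h(\phi_n,\sigma^2_{\omega_\phi})$ intact, so both ULAs share the common value $\gamma^S_n$.

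The hard part will be making the Fej\'er-kernel limit rigorous under the expectation: since $F_{L_x}$ is not uniformly bounded in $L_x$, interchanging the limit with the integral requires localizing around each root of $g$, using that $F_{M_t}F_{M_r}p_\varphi$ is smooth and bounded there, and bounding the off-root part together with the $k\neq0$ and large-$|m|$ contributions via the Gaussian tail decay and the decay of the Fej\'er kernel between its peaks. The remaining work is bookkeeping --- tracking exactly which periodized terms are retained so that, after the $m\mapsto m+1$ reindexing, they coincide with the definition of $h(x,y)$ term by term.
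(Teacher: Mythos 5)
Your proof is correct and follows essentially the same route as the paper's Appendix C: there, too, the key step is treating the Fej\'er kernels as a periodic approximate identity in the $L_x,L_y\to\infty$ limit (via an auxiliary Fej\'er-kernel lemma cited from \cite{rust2013convergence}), with the expectation reduced to the wrapped-Gaussian density of the prediction error evaluated at the zero-mismatch points, yielding $2M_tM_rP_\varphi(0)=M_tM_r\,h(\varphi_n,\sigma^2_{\omega_\varphi})$ and $h(\phi_n,\sigma^2_{\omega_\phi})$ exactly as in your delta-composition sum over the roots $\epsilon_\varphi\in\{2m\pi\}\cup\{2m\pi-2\varphi_n\}$ with Jacobian $|\sin\varphi_n|$. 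The only difference is presentational: the paper first changes variables to $y=2\Delta\cos\varphi_n$, writes the PDF $P_\varphi(y)$ explicitly and periodizes the integrand before invoking the Fej\'er limit, whereas you work directly in error space with the Dirac-comb limit and the composition rule, discarding the same exponentially small off-support terms.
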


Proposition \ref{InftyBand} presents a closed-form achievable rate when the number of IOS elements is large, which is verified to achieve a good approximation by the Monte Carlo simulations in Section \ref{simulationS}. According to the analysis in (\ref{ReceivedPowerEquation}), we have
\begin{equation}
\sigma_{z_{\varphi_{n}}}^{2} = \frac{{\sigma _R^2}}{{\gamma^S_n{{\sin }^2}\varphi_n }} =  \frac{A_{\varphi_n}}{\eta \beta^{R}},
\end{equation}
where $A_{\varphi_n} = \frac{  \Delta t \sigma_s^2 \sigma _R^2 }{{\Delta T {P^{\max}}  \beta_{G,n}^2 L  M_t M_r h(\varphi_n, \sigma^2_{\omega_\varphi}) h(\phi_n, \sigma^2_{\omega_\phi}) {{\sin }^2}\varphi_n } }$, $\sigma _R^2$ is the variance parameter, and $A_{\varphi_n}$ represents the angle variance when $\eta = 1$ and $\beta^{R} = 1$. Since $d \gg v \Delta T$, the angle variance of the state tracking error in (\ref{StateTracking}) can be given by
\vspace{-1mm}
\begin{equation}\label{StateTrackingError}
	\sigma _{\tilde \varphi_n} ^2 = {\sigma _{{\omega_{\varphi}}} ^2} - \frac{\sigma _{{\omega_{\varphi}}} ^2}{\sigma _{{\omega_{\varphi}}} ^2 + \frac{A_{\varphi_n}}{\eta \beta^{R}}} {\omega^2 _{\varphi}} = \frac{\sigma _{{\omega_{\varphi}}} ^2 A_{\varphi_n}}{\sigma _{{\omega_{\varphi}}} ^2 {\eta \beta^{R}} + {A_{\varphi_n}}}.
	\vspace{-1mm}
\end{equation}
In (\ref{StateTrackingError}), it can be seen that the variance of the state tracking error decreases monotonically with the variance of angle estimation $\sigma_{z_{\varphi_{n}}}^{2}$. Similarly, we have $\sigma _{\tilde \phi_n} ^2 = \frac{\sigma _{{\omega_{\phi}}} ^2 A_{\phi_n}}{\sigma _{{\omega_{\phi}}} ^2 {\eta \beta^{R}} + {A_{\phi_n}}}$, where $A_{\phi_n} = \frac{  \Delta t \sigma_s^2 \sigma _R^2 }{{\Delta T {P^{\max}}  \beta_{G,n}^2 L  M_t M_r h(\varphi_n, \sigma^2_{\omega_\varphi}) h(\phi_n, \sigma^2_{\omega_\phi}) {{\sin }}^2\phi_n } }$.
\begin{Pro}\label{ClosedFormAchievable}
	When $L_x \to \infty$ and $L_y \to \infty$, the achievable rates during the S\&C and communication-only phases can be respectively approximated as 
	\vspace{-1mm}
	\[\tilde R^{S\&C}_n = \log_2\left( 1 + \tilde P \left(1 - \beta^{R} \right) h(\varphi_n, \sigma^2_{\omega_\varphi}) h(\phi_n, \sigma^2_{\omega_\phi})  \right),\]
	and
	\[\tilde R^C_n = \log_2\left( 1 + \tilde P  h(\varphi_n, \sigma^2_{\tilde \varphi_n}) h(\phi_n, \sigma^2_{\tilde \phi_n})  \right),\]
where $ \tilde{ P} =  \frac{ 4P^{\max} {\beta_{G,n}} \beta_{h}   L M_t }{\sigma_{c}^{2}}$.
\end{Pro}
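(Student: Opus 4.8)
The plan is to prove both approximations by direct substitution of the optimized beamformers, followed by an asymptotic evaluation of Fej\'er-kernel averages, reusing the machinery behind Proposition~\ref{InftyBand}. Since $\tilde R_n^{S\&C}$ and $\tilde R_n^{C}$ are the Jensen upper bounds from the communication model, treated as (asymptotically tight) approximations, it suffices to evaluate $\frac{1}{\sigma_c^2}\mathbb{E}\!\left[\big|{\bm h}^T{\bm \Theta}^{\xi,T}_n{\bm H}^{\mathrm{DL}}_n{\bm f}^{\xi}_n\big|^2\right]$ as $L_x,L_y\to\infty$, for $\xi\in\{S\&C,C\}$. First I would invoke Lemma~\ref{OptimalPhaseShift}, Lemma~\ref{EqualPowerSplitting}, and their communication-only counterparts stated right after them: the refracting matrix ${\bm \Theta}^{\xi,T}_n$ has a common element magnitude ($\sqrt{1-\beta^{R}}$ in the S\&C phase, $1$ in the communication-only phase) and a separable linear phase ramp whose gradients $q^{T}_x,q^{T}_y$ are chosen precisely so that, after composing the RSU--IOS steering ${\bm a}_{\mathrm{I}}(\phi_n,-\varphi_n)$ with the fixed IOS--device channel ${\bm h}$, the residual per-element phase collapses, up to the common phase $\theta_0$, to $-\pi(l_x-1)\Delta\cos\varphi_n+\pi(l_y-1)\Delta\cos\phi_n$, with $\Delta\cos\varphi_n=\cos\varphi_{n|n-1}-\cos\varphi_n$ (and its state-tracking analog when $\xi=C$).

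Next I would factor the effective channel. Substituting ${\bm H}^{\mathrm{DL}}_n=\sqrt{\beta_{G,n}}\,{\bm a}_{\mathrm{I}}(\phi_n,-\varphi_n){\bm a}^T_{\mathrm{R}}(\varphi_n)$ and ${\bm f}^{S\&C}_n=\sqrt{P^{\max}/M_t}\,{\bm a}^\dagger_{\mathrm{R}}(\varphi_{n|n-1})$, the squared magnitude splits into the product of the RSU transmit gain $|{\bm a}^T_{\mathrm{R}}(\varphi_n){\bm f}^{S\&C}_n|^2=P^{\max}F_{M_t}(\Delta\cos\varphi_n)$, the IOS refracting-array gain $|{\bm h}^T{\bm \Theta}^{S\&C,T}_n{\bm a}_{\mathrm{I}}(\phi_n,-\varphi_n)|^2=\beta_h(1-\beta^{R})L\,F_{L_x}(\Delta\cos\varphi_n)F_{L_y}(\Delta\cos\phi_n)$, and the pathloss $\beta_{G,n}$, where each step uses only the identity $\big|\sum_{m=0}^{M-1}e^{j\pi m x}\big|^2=M F_M(x)$ and the Kronecker separability of the UPA. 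Dividing by $\sigma_c^2$ and taking expectations over the independent angular errors yields $\frac{1}{\sigma_c^2}\mathbb{E}[\cdot]=\frac{P^{\max}\beta_{G,n}\beta_h(1-\beta^{R})L}{\sigma_c^2}\,\mathbb{E}\!\big[F_{L_x}(\Delta\cos\varphi_n)F_{M_t}(\Delta\cos\varphi_n)\big]\,\mathbb{E}\!\big[F_{L_y}(\Delta\cos\phi_n)\big]$.

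The final step is the asymptotics. As $L_x\to\infty$ the kernel $F_{L_x}$ concentrates on the lattice $\Delta\cos\varphi_n\in 2\mathbb{Z}$, on which the $2$-periodic $F_{M_t}$ equals $F_{M_t}(0)=M_t$; pulling $M_t$ out, the remaining limit $\lim_{L_x\to\infty}\mathbb{E}[F_{L_x}(\Delta\cos\varphi_n)]$ is computed by the same Gaussian-pushforward argument that produced $h(\cdot,\cdot)$ in the proof of Proposition~\ref{InftyBand}. The one difference is that in the sensing link the Fej\'er argument was $2\Delta\cos\varphi_n$ (the RSU--IOS hop is traversed twice) whereas here it is $\Delta\cos\varphi_n$ (one-way), which halves the Dirac-comb spacing and gives $\lim_{L_x\to\infty}\mathbb{E}[F_{L_x}(\Delta\cos\varphi_n)]=2\,h(\varphi_n,\sigma^2_{\omega_\varphi})$, and likewise $2\,h(\phi_n,\sigma^2_{\omega_\phi})$ for the other marginal; the two factors of $2$ combine into the constant $4$ in $\tilde P=4P^{\max}\beta_{G,n}\beta_h L M_t/\sigma_c^2$, so $\frac{1}{\sigma_c^2}\mathbb{E}[\cdot]\to\tilde P(1-\beta^{R})h(\varphi_n,\sigma^2_{\omega_\varphi})h(\phi_n,\sigma^2_{\omega_\phi})$, i.e.\ $\tilde R^{S\&C}_n$. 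For $\tilde R^{C}_n$ everything repeats verbatim after setting $\beta^{C,R}=0,\ \beta^{C,T}=1$ (so the factor $1-\beta^{R}$ drops) and replacing the prediction-error variances by the state-tracking-error variances $\sigma^2_{\tilde\varphi_n},\sigma^2_{\tilde\phi_n}$ of (\ref{StateTrackingError}), since in that phase both the RSU beamformer and the IOS phases are aligned to the Kalman state-tracking estimate $\tilde{\bm x}_n$. The main obstacle is making the Fej\'er-kernel limit rigorous: justifying the interchange of $\lim_{L\to\infty}$ and $\mathbb{E}[\cdot]$ (e.g.\ weak convergence of $\tfrac12 F_L$ to a Dirac comb tested against the bounded density of $\Delta\cos\varphi_n$, plus dominated convergence to handle the extra $F_{M_t}$ factor), and carefully tracking the grating-lobe images so that the factor-of-two rescaling, hence the constant $4$ in $\tilde P$, comes out exactly.
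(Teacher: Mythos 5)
Your proposal is correct and follows essentially the same route as the paper, whose Appendix D simply invokes Lemmas~\ref{OptimalPhaseShift} and~\ref{EqualPowerSplitting} and "a method similar to the proof of Proposition~\ref{InftyBand}" before stating the two SNR expressions; you substitute the optimized beamformers, factor the link into $F_{M_t}$, $F_{L_x}$, $F_{L_y}$ gains, and reuse the Fej\'er-kernel/Gaussian-pushforward limit exactly as the paper intends, correctly obtaining the one-way argument $\Delta\cos\varphi_n$ (versus $2\Delta\cos\varphi_n$ in the sensing link) as the source of the constant $4$ in $\tilde P$, and correctly swapping $\sigma^2_{\omega_\varphi},\sigma^2_{\omega_\phi}$ for $\sigma^2_{\tilde\varphi_n},\sigma^2_{\tilde\phi_n}$ with $\beta^{C,R}=0$ in the communication-only phase. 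Your write-up is in fact more explicit than the paper's own proof (only your phrase about "halving the Dirac-comb spacing" is loosely worded, since removing the factor $2$ doubles the peak spacing in the underlying variable while doubling the mass picked up at the origin, but your quantitative conclusion of a factor $2$ per angular dimension is right).
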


In Proposition \ref{ClosedFormAchievable}, the accurate angles $\varphi_n$, and  $\phi_n$ cannot be obtained, which can be practically assumed to be $\varphi_{n|n-1}$ and $\phi_{n|n-1}$, respectively. Then, the achievable rate can be rewritten as a function about $\eta$ and $\beta^R$, and the optimal solution of (P1) can be obtained via a two-dimension search over $\eta$ and $\beta^R$. To gain more insights, we will further provide some useful performance analysis to illustrate the benefits between S\&C  and provide useful insights for practical system design.

\section{Performance Analysis}

In this section, some practical approximations are adopted to draw some useful insights. In $h(x,y)$ (defined in Proposition \ref{InftyBand}), the terms $ {{e^{ - \frac{{{{2( {k\pi } )}^2}}}{{y}}}} + {e^{ - \frac{{{{2( {(k + 1)\pi } - \varphi_{n|-1} )}^2}}}{{{y}}}}}}$ is negligible when $k \ge 1$ or $k \le -1$, and thus, we have
\vspace{-1mm}
\begin{equation}
	h(x,y) \approx \frac{1}{{\sqrt {2\pi{y} }  |\sin{x}|}} \left({1 + {e^{ - \frac{{{{2\left( {\pi  - {x} } \right)}^2}}}{{y}}}}}  \right) \buildrel \Delta \over= \tilde h(x, y).
	\vspace{-1mm}
\end{equation}
This approximation is practical since the variance of the estimated angle is much less than $2\pi$. Then, the achievable rate can be approximately given by 
\begin{align}\label{AchievableRate}
&\tilde R_n \!\approx  \eta \log_2\!\left( 1 \!+ {\tilde  P} \left(1 \!- \beta^{R} \right) \tilde h(\varphi_{n|n-1}, \sigma^2_{\omega_\varphi}) \tilde h(\phi_{n|n-1}, \sigma^2_{\omega_\phi})  \!\right) \nonumber \\
&\!+\! \left(1-\eta\right) \log_2\!\left( 1 \!+ {\tilde  P}  \tilde h(\varphi_{n|n-1}, \sigma^2_{\tilde \varphi_n}) \tilde h(\phi_{n|n-1}, \sigma^2_{\tilde \phi_n})  \right).		
\end{align}

Furthermore, a condition that specifies whether it is necessary for the IOS to reflect signals for sensing is derived in the following.
For $x \in [\Delta ,\pi - \Delta]$ with ${ \frac{{2{{ \Delta }^2}}}{y}} \gg 1$, we have ${e^{ - \frac{{{{2( {\pi  - x} )}^2}}}{y}}} \ll 1$. This approximation is practical since ${{{2{{( \pi - \phi_n )}^2}}}} \gg \sigma^2_{\omega_\phi}$ holds under most general parameter setups. In this case, $\tilde R$ can be further approximated as
\vspace{-1mm}
\begin{align}
	\hat R = & \eta {\log _2}\left( 1 + C_1\left( {1 - \beta^{R} }\right)  \right) + \left( {1 - \eta } \right) \\
	& {\log _2}\left( 1 \!+\! {C_1} {\sqrt \frac{{(\sigma _{{\omega_{\varphi}}} ^2 {\eta \beta^{R}} + {A_{\varphi_n}})(\sigma _{{\omega_{\phi}}} ^2 {\eta \beta^{R}} + {A_{\phi_n}})} } {A_{\varphi_n} A_{\phi_n}}} \right), \nonumber
\end{align}
where $C_1 = \frac{ 2 {P^{\max}\beta_{G,n}} \beta_{h}    L M_t }{\pi  \sin( \varphi_{n|n-1}) \sin( \phi_{n|n-1}) \sigma_{\omega_{\varphi}} \sigma_{\omega_{\phi}} \sigma_{c}^{2}}$.
In the following, this approximated achievable rate is further analyzed to gain more useful insights for system design. In this case, (P1) is simplified to (P2).
\begin{alignat}{2}
	\label{P2}
	(\rm{P2}): \quad & \begin{array}{*{20}{c}}
		\mathop {\max }\limits_{\eta ,\beta^{R}} \quad  \hat R
	\end{array} & \\ 
	\mbox{s.t.}\quad
	& \beta^{R}\in [0,1], \eta \in [0,1] . & \tag{\ref{P2}a}
\end{alignat}

\begin{Pro}\label{SensingCondition}
	At the optimal solution of (P2), it follows that the S\&C phase is needed if and only if $\frac{{\sigma _{{\omega _\phi }}^2}}{{{A_{\phi_n} }}} + \frac{{\sigma _{{\omega _\varphi }}^2}}{{{A_{\varphi_n} }}} > 2$.
\end{Pro}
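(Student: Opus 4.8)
The plan is to collapse (P2) to a one–dimensional problem in $\beta^{R}$ for each fixed $\eta$, and to show that along this family $\hat R$ is concave, so its entire behaviour is governed by the one-sided derivative at $\beta^{R}=0$. First I would observe that whenever no power is split for sensing, i.e.\ $\beta^{R}=0$ or $\eta=0$, the tracking variances collapse to the process-noise variances ($\sigma^2_{\tilde\varphi_n}=\sigma^2_{\omega_\varphi}$, $\sigma^2_{\tilde\phi_n}=\sigma^2_{\omega_\phi}$), so that in either case $\hat R=\log_2(1+C_1)\triangleq\hat R_0$, independently of the remaining variable. Hence ``the S\&C phase is needed'' is precisely the statement $\max_{\eta,\beta^R}\hat R>\hat R_0$, equivalently that the optimal pair satisfies $\eta^\star\beta^{R\star}>0$ (a point with $\eta=0$ or $\beta^R=0$ only ever attains $\hat R_0$). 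For brevity put $a=\sigma^2_{\omega_\varphi}/A_{\varphi_n}$, $b=\sigma^2_{\omega_\phi}/A_{\phi_n}$, so that $\sigma^2_{\omega_\varphi}\eta\beta^R+A_{\varphi_n}=A_{\varphi_n}(1+a\eta\beta^R)$ and likewise for $\phi$; the target inequality reads $a+b>2$.

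Next I would fix $\eta\in(0,1]$ and differentiate $\hat R$ in $\beta^R$ at $\beta^R=0$. The first summand supplies the refraction loss $-\eta C_1/[(1+C_1)\ln 2]$; for the second, writing $g(\beta^R)\triangleq\sqrt{(1+a\eta\beta^R)(1+b\eta\beta^R)}$ with $g(0)=1$ and $g'(0)=(a+b)\eta/2$, one obtains the tracking gain, and altogether
\[
\left.\frac{\partial\hat R}{\partial\beta^R}\right|_{\beta^R=0^+}=\frac{C_1\eta}{(1+C_1)\ln 2}\left(\frac{(1-\eta)(a+b)}{2}-1\right).
\]
Thus introducing an infinitesimal reflection ratio helps iff $(1-\eta)(a+b)>2$, and since $\sup_{\eta\in[0,1]}(1-\eta)=1$, the best one can hope for is $a+b>2$.

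The key structural step is to show that for every fixed $\eta$, $\hat R(\eta,\cdot)$ is concave on $[0,1]$. The term $\eta\log_2(1+C_1(1-\beta^R))$ is the logarithm of an affine function (times $\eta\ge0$), hence concave. For the second term it suffices that $g(\beta^R)=\sqrt{q(\beta^R)}$ is concave, where $q(\beta^R)=1+(a+b)\eta\beta^R+ab\eta^2(\beta^R)^2$ is a convex quadratic, because $\log_2(1+C_1 g)$ is a nondecreasing concave function composed with $g$. A direct differentiation gives $(\sqrt q)''$ with the sign of $2q''q-(q')^2=\eta^2\big(4ab-(a+b)^2\big)=-\eta^2(a-b)^2\le 0$, so $g$, and therefore $\hat R(\eta,\cdot)$, is concave. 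This short computation is the main (and essentially only) obstacle; once it is in hand everything else is bookkeeping.

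Finally I would assemble the two directions. If $a+b\le 2$, then $(1-\eta)(a+b)/2\le 1$ for all $\eta\in[0,1]$, so the right derivative above is $\le 0$ for every $\eta$; concavity of $\hat R(\eta,\cdot)$ then gives $\hat R(\eta,\beta^R)\le\hat R(\eta,0)=\hat R_0$ on all of $[0,1]^2$, so a sensing-free configuration is optimal and the S\&C phase is not needed. Conversely, if $a+b>2$, pick any $\eta\in\big(0,\,1-2/(a+b)\big)$ (a nonempty interval); then $(1-\eta)(a+b)/2>1$, the right derivative is strictly positive, hence $\hat R(\eta,\beta^R)>\hat R_0$ for all sufficiently small $\beta^R>0$, forcing $\max\hat R>\hat R_0$ and thus $\eta^\star\beta^{R\star}>0$. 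Combining the two implications yields the claimed necessary-and-sufficient condition. (The edge $\eta=1$ never competes, since there $\hat R=\log_2(1+C_1(1-\beta^R))\le\hat R_0$, which is why restricting to small $\eta$ in the converse is harmless.)
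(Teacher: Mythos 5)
Your proof is correct, and it reaches the paper's condition by a genuinely (if mildly) different route. The paper slices the other way: it fixes $\beta^R$, examines $\left.\partial \hat R/\partial \eta\right|_{\eta=0}$ as a function $g(\beta^R)$ with $g(0)=0$, shows $g$ is concave, and concludes $\eta^*=0$ exactly when $g'(0)\le 0$, which (using $D_1\sqrt{D_2D_3}=C_1$) reduces to $\sigma^2_{\omega_\phi}/A_{\phi_n}+\sigma^2_{\omega_\varphi}/A_{\varphi_n}\le 2$; concavity of $\hat R$ in $\eta$ then globalizes the boundary derivative. You instead fix $\eta$, differentiate in $\beta^R$ at the no-sensing boundary to get the explicit sign criterion $(1-\eta)(a+b)>2$ with $a=\sigma^2_{\omega_\varphi}/A_{\varphi_n}$, $b=\sigma^2_{\omega_\phi}/A_{\phi_n}$, globalize with concavity of $\hat R(\eta,\cdot)$ in $\beta^R$, and in the converse direction pick any $\eta\in\bigl(0,1-2/(a+b)\bigr)$. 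Your version buys two things the paper does not spell out: an actual verification of the needed concavity (the computation $2q''q-(q')^2=-\eta^2(a-b)^2\le 0$ for $q(\beta^R)=(1+a\eta\beta^R)(1+b\eta\beta^R)$, where the paper only states ``it is not difficult to verify''), and the sharper quantitative insight that an infinitesimal reflection ratio helps only when $\eta<1-2/(a+b)$. The paper's slicing in $\eta$, on the other hand, speaks directly to the statement ``$\eta^*>0$ versus $\eta^*=0$''; your reformulation of ``the S\&C phase is needed'' as $\max\hat R>\hat R_0=\log_2(1+C_1)$ is equivalent, since every configuration with $\eta\beta^R=0$ attains exactly $\hat R_0$, and you justify this equivalence correctly. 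Your boundary-derivative computation and the handling of the edge $\eta=1$ also check out, so both directions of the if-and-only-if are established.
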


In Proposition \ref{SensingCondition}, it is shown that $\eta^* > 0$ when $\frac{{\sigma _{{\omega _\phi }}^2}}{{{A_{\phi_n} }}} + \frac{{\sigma _{{\omega _\varphi }}^2}}{{{A_{\varphi_n} }}} > 2$; otherwise, $\eta^* = 0$. Intuitively, the derived sufficient and necessary condition in Proposition \ref{SensingCondition} tends to be satisfied when the estimation accuracy is lower or the measurement accuracy is higher. 

\section{Simulation Results}
\label{simulationS}
\begin{table}[b]
	\footnotesize
	\centering
	\caption{System Parameters}
	\label{tab1}
	\begin{IEEEeqnarraybox}[\IEEEeqnarraystrutmode\IEEEeqnarraystrutsizeadd{2pt}{1pt}]{v/c/v/c/v/c/v/c/v/c/v/c/v}
		\IEEEeqnarrayrulerow\\
		&\mbox{Parameter}&&\mbox{Value}&&\mbox{Parameter}&&\mbox{Value}&&\mbox{Parameter}&&\mbox{Value}&\\
		\IEEEeqnarraydblrulerow\\
		\IEEEeqnarrayseprow[2pt]\\
		&T&&10 \mbox{s}&&N&&500&&M_t&&8&\IEEEeqnarraystrutsize{0pt}{0pt}\\
		\IEEEeqnarrayseprow[2pt]\\
		\IEEEeqnarrayrulerow\\
		\IEEEeqnarrayseprow[2pt]\\  
		&M_r&& 8 &&L_x&&80&&L_y&&80&\IEEEeqnarraystrutsize{0pt}{0pt}\\
		\IEEEeqnarrayseprow[2pt]\\
		\IEEEeqnarrayrulerow\\
		\IEEEeqnarrayseprow[2pt]\\  
		&\sigma_R^2&& 10^{-1} &&P^{\max}&&0.1  \mbox{W}&&\beta_0&& -30 \mbox{dB} &\IEEEeqnarraystrutsize{0pt}{0pt}\\
		\IEEEeqnarrayseprow[2pt]\\
		\IEEEeqnarrayrulerow\\
		\IEEEeqnarrayseprow[2pt]\\
		&\sigma_s^2&&-70 \mbox{dBm}&&\sigma^2_{\omega_\phi} \ \sigma^2_{\omega_\varphi}&&0.1 &&\Delta t&&0.1 \mu\mbox{s} &\IEEEeqnarraystrutsize{0pt}{0pt}\\
		\IEEEeqnarrayseprow[2pt]\\
		\IEEEeqnarrayrulerow\\
		\IEEEeqnarrayseprow[2pt]\\
		&\sigma_c^2&& -70 \mbox{dBm} &&f_c && 30 \mbox{GHz} && \Delta T && 0.02 \mbox{s} &\IEEEeqnarraystrutsize{0pt}{0pt}\\
		\IEEEeqnarrayseprow[2pt]\\
		\IEEEeqnarrayrulerow
	\end{IEEEeqnarraybox}  
\end{table} 

In this section, Monte Carlo simulation results are provided for characterizing the performance of the proposed sensing scheme and for gaining insight into the design and implementation of IOS-assisted ISAC systems. The main system parameters are listed in Table \ref{tab1}. The coordinate of the RSU is (0 m, 0 m, 20 m), the initial location of the vehicle is (-100 m, 20 m, 0 m), and the vehicle's speed is 20 m/s. We compare the proposed schemes with two benchmarks: 
\begin{itemize}
	\item {\bf{Refraction}}: The reflecting phase shifts ${\bm{\Theta}}^{\xi,R}_{n}$ are designed randomly with the optimal $\eta$, $\beta^R$, and ${\bm{\Theta}}^{\xi,T}_{n}$.
	\item {\bf{Prediction}}: The beamforming and phase shifts are designed only based on the state estimation, i.e., $\eta$ = 0. 
\end{itemize}

\begin{figure*}[t]
	\begin{minipage}[t]{0.25\linewidth}
		\centering
		\includegraphics[width=4.4cm]{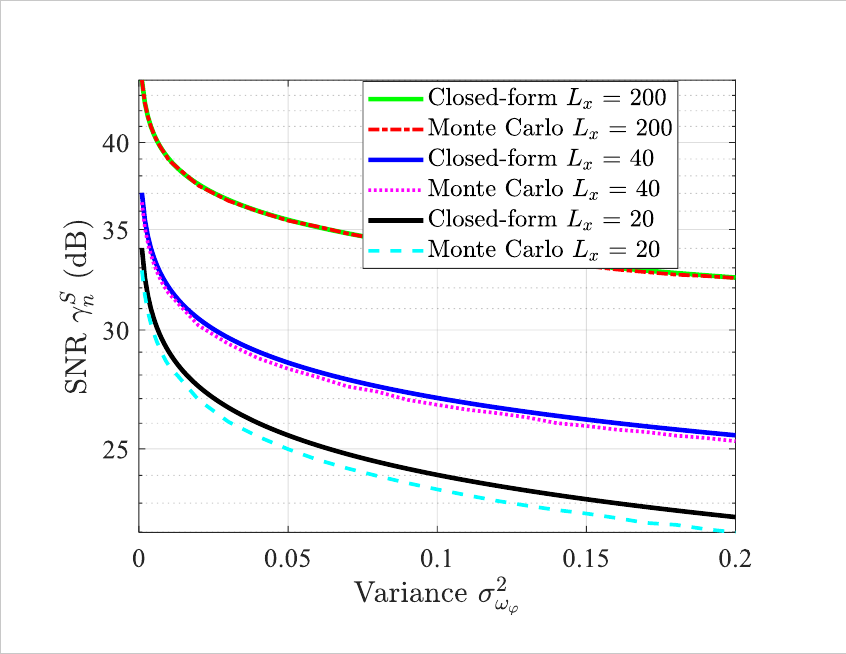}
		\caption{Evaluation of the Closed- \\
			form expression.}
		\label{figure6a}
	\end{minipage}%
	\begin{minipage}[t]{0.25\linewidth}
		\centering
		\includegraphics[width=4.4cm]{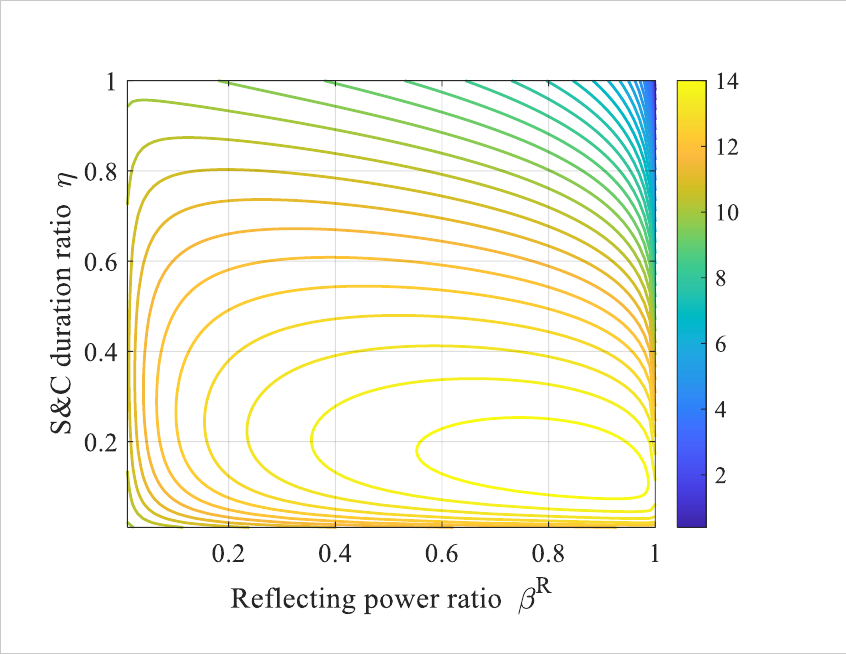}
		\caption{Impact of $\beta^R$ and $\eta$ on the \\
			 achievable rate.}
		\label{figure6b}
	\end{minipage}%
	\begin{minipage}[t]{0.25\linewidth}
		\centering
		\includegraphics[width=4.4cm]{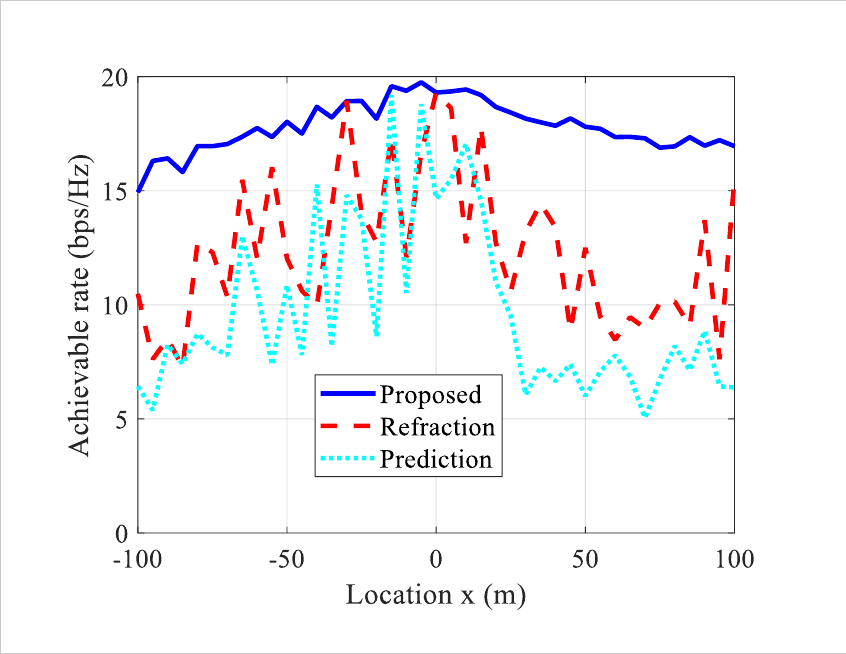}
		\caption{Comparison of achievable \\
			rates versus locations.}
		\label{figure6c}
	\end{minipage}%
	\begin{minipage}[t]{0.25\linewidth}
		\centering
		\includegraphics[width=4.4cm]{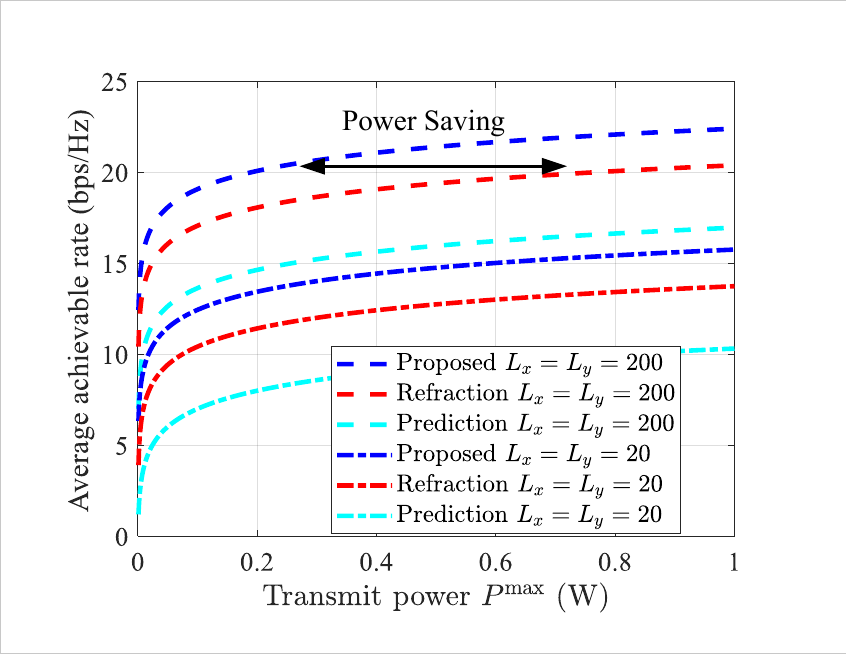}
		\caption{Comparison of the achievable rate under different transmit power.}
		\label{figure6d}
	\end{minipage}%
\end{figure*}

As shown in Fig.~\ref{figure6a}, the SNR $\gamma^S_n$ of Monte Carlo simulations with $L_x = 40$ approaches to the closed-form expression of the SNR derived in Proposition \ref{InftyBand}, which confirms that it achieves a good approximation. In Fig.~\ref{figure6b}, at the optimal solution, $\eta^* = 0.18$ and $\beta^{R*} = 0.8$. It can be shown that with any given $\beta^R$, the achievable rate first increases and then decreases with the increase of the S\&C duration ratio $\eta$. The main reason is that appropriately increasing the sensing duration can reduce the variance of parameter measurements and improve the beamforming gain, however, the communication performance gain brought by sensing cannot compensate for the performance loss caused by sensing time and power consumption during the S\&C phase, especially for the ratio $\eta$ approaching to one.

In Fig.~\ref{figure6c}, it can be seen that the achievable rate of the benchmark schemes fluctuates greatly since the state estimation error of the vehicle location is relatively large, leading to a significant performance loss caused by beam misalignment of the RSU and the IOS. Our proposed method provides higher throughput and stable data transmission for the communication user. The main reason is that both the state estimation and measurement results are exploited to improve the beam alignment performance, and the optimal ratio of sensing time and reflecting power is taken to achieve a better tradeoff between S\&C performance, thereby leading to an effective communication performance improvement benefiting from sensing.

Fig.~\ref{figure6d} shows the performance comparison among the transmit power, the number of IOS elements, and the average achievable rate. Specifically, when the transmit power threshold ${{P}^{\max}}$ is less than 0.01 W, the achievable rate of our proposed scheme is significantly improved compared to the benchmark schemes. Moreover, to achieve the same achievable rate, the proposed scheme can significantly reduce the transmit power requirements, and the main reason is that proper power and time allocation for sensing can effectively improve the beam alignment, thereby providing a large beamforming gain of the RSU and the IOS within the communication-only duration. 

\section{Conclusions}
\label{Conclusion}
In this paper, we investigated a new IOS-enabled sensing-assisted communication system and proposed a two-phase ISAC protocol to fulfill efficient balance between S\&C. The IOS phase shift and the sensing duration were jointly optimized to maximize the achievable rate. A closed-form expression of the achievable rate under uncertain locations was derived to facilitate resource allocation. Furthermore, a sufficient and necessary condition for the existence of the S\&C phase is derived to further simplify the problem. The numerical results validated the efficiency of our design over the benchmark schemes and also confirmed the benefits of the sensing-assisted communication framework.

\footnotesize  	
\bibliography{mybibfile}

\begin{thebibliography}{10}
\providecommand{\url}[1]{#1}
\csname url@samestyle\endcsname
\providecommand{\newblock}{\relax}
\providecommand{\bibinfo}[2]{#2}
\providecommand{\BIBentrySTDinterwordspacing}{\spaceskip=0pt\relax}
\providecommand{\BIBentryALTinterwordstretchfactor}{4}
\providecommand{\BIBentryALTinterwordspacing}{\spaceskip=\fontdimen2\font plus
\BIBentryALTinterwordstretchfactor\fontdimen3\font minus
  \fontdimen4\font\relax}
\providecommand{\BIBforeignlanguage}[2]{{%
\expandafter\ifx\csname l@#1\endcsname\relax
\typeout{** WARNING: IEEEtran.bst: No hyphenation pattern has been}%
\typeout{** loaded for the language `#1'. Using the pattern for}%
\typeout{** the default language instead.}%
\else
\language=\csname l@#1\endcsname
\fi
#2}}
\providecommand{\BIBdecl}{\relax}
\BIBdecl

\bibitem{Challenges2021Gyawali}
S.~Gyawali, S.~Xu, Y.~Qian, and R.~Q. Hu, ``Challenges and solutions for
  cellular based v2x communications,'' \emph{IEEE Commun. Surveys Tuts.},
  vol.~23, no.~1, pp. 222--255, Firstquarter 2021.

\bibitem{Bayesian2021Yuan}
W.~Yuan, F.~Liu, C.~Masouros, J.~Yuan, D.~W.~K. Ng, and N.~González-Prelcic,
  ``Bayesian predictive beamforming for vehicular networks: A low-overhead
  joint radar-communication approach,'' \emph{IEEE Trans. Wireless Commun.},
  vol.~20, no.~3, pp. 1442--1456, Mar. 2021.

\bibitem{liu2022integrated}
F.~Liu, Y.~Cui, C.~Masouros, J.~Xu, T.~X. Han, Y.~C. Eldar, and S.~Buzzi,
  ``Integrated sensing and communications: Towards dual-functional wireless
  networks for {6G} and beyond,'' \emph{IEEE J. Sel. Areas Commun.}, vol.~40,
  no.~6, pp. 1728--1767, Jun. 2022.

\bibitem{Mao2021Channel}
W.~Ning \emph{et~al.}, ``Small passive object location tracking based on
  millimeter-wave sensing,'' \emph{Chinese J. Internet Things}, vol.~4, no.~2,
  pp. 105--112, 2020.

\bibitem{Liu2020Joint}
F.~Liu, C.~Masouros, A.~P. Petropulu, H.~Griffiths, and L.~Hanzo, ``Joint radar
  and communication design: Applications, state-of-the-art, and the road
  ahead,'' \emph{IEEE Trans. Commun.}, vol.~68, no.~6, pp. 3834--3862, Jun.
  2020.

\bibitem{Yuan2021Integrated}
W.~Yuan, Z.~Wei, S.~Li, J.~Yuan, and D.~W.~K. Ng, ``Integrated sensing and
  communication-assisted orthogonal time frequency space transmission for
  vehicular networks,'' \emph{IEEE J. Sel. Top. Signal Process.}, vol.~15,
  no.~6, pp. 1515--1528, Nov. 2021.

\bibitem{meng2022throughput}
K.~Meng, Q.~Wu, S.~Ma, W.~Chen, K.~Wang, and J.~Li, ``Throughput maximization
  for {UAV}-enabled integrated periodic sensing and communication,''
  \emph{{\rm{to be published in}} IEEE Trans. Wireless Commun.}, doi:
  10.1109/TWC.2022.3197623.

\bibitem{liu2020radar}
F.~Liu, W.~Yuan, C.~Masouros, and J.~Yuan, ``Radar-assisted predictive
  beamforming for vehicular links: Communication served by sensing,''
  \emph{IEEE Trans. Wireless Commun.}, vol.~19, no.~11, pp. 7704--7719, Nov.
  2020.

\bibitem{du2021integrated}
Z.~Du, F.~Liu, W.~Yuan, C.~Masouros, Z.~Zhang, and G.~Caire, ``Integrated
  sensing and communications for v2i networks: Dynamic predictive beamforming
  for extended vehicle targets,'' \emph{arXiv preprint arXiv:2111.10152}, 2021.

\bibitem{Xu2022Simultaneously}
J.~Xu, Y.~Liu, X.~Mu, J.~T. Zhou, L.~Song, H.~V. Poor, and L.~Hanzo,
  ``Simultaneously transmitting and reflecting intelligent {Omni-Surfaces}:
  Modeling and implementation,'' \emph{IEEE Veh. Technol. Mag.}, vol.~17,
  no.~2, pp. 46--54, Jun. 2022.

\bibitem{Simultaneously2022Mu}
X.~Mu, Y.~Liu, L.~Guo, J.~Lin, and R.~Schober, ``Simultaneously transmitting
  and reflecting ({STAR}) {RIS} aided wireless communications,'' \emph{IEEE
  Trans. Wireless Commun.}, vol.~21, no.~5, pp. 3083--3098, May 2022.

\bibitem{Huang2022Transforming}
Z.~Huang, B.~Zheng, and R.~Zhang, ``Transforming fading channel from fast to
  slow: Intelligent refracting surface aided high-mobility communication,''
  \emph{IEEE Trans. Wireless Commun.}, vol.~21, no.~7, pp. 4989--5003, Jul.
  2022.

\bibitem{Jayaprakasam2017Robust}
S.~Jayaprakasam, X.~Ma, J.~W. Choi, and S.~Kim, ``Robust beam-tracking for
  {mmWave} mobile communications,'' \emph{IEEE Commun. Lett.}, vol.~21, no.~12,
  pp. 2654--2657, Dec. 2017.

\bibitem{Zheng2022Survey}
B.~Zheng, C.~You, W.~Mei, and R.~Zhang, ``A survey on channel estimation and
  practical passive beamforming design for intelligent reflecting surface aided
  wireless communications,'' \emph{IEEE Commun. Surveys Tuts.}, vol.~24, no.~2,
  pp. 1035--1071, Secondquarter 2022.

\bibitem{schmidt1986multiple}
R.~Schmidt, ``Multiple emitter location and signal parameter estimation,''
  \emph{IEEE Trans. Antennas Propag.}, vol.~34, no.~3, pp. 276--280, Mar. 1986.

\bibitem{Liu2022Survey}
A.~Liu \emph{et~al.}, ``A survey on fundamental limits of integrated sensing
  and communication,'' \emph{IEEE Commun. Surveys Tuts.}, vol.~24, no.~2, pp.
  994--1034, Secondquarter 2022.

\bibitem{Myproof}
\url{https://1drv.ms/b/s!AjbZfp4khRQDgQeNwnzo0709DBPt?e=QS9ZKf}.

\bibitem{rust2013convergence}
B.~Rust, ``Convergence of fourier series,'' 2013.

\end{thebibliography}
\bibliographystyle{IEEEtran}

\normalsize 
\section*{Appendix A: \textsc{Proof of Lemma \ref{OptimalPhaseShift}}}
First, it can be readily proved that the phase shift optimization problem for maximizing the SNR of the received echoes is equivalent to the problem of the IOS beamforming gain maximization, i.e.,
\begin{equation}
	\mathop {\max }\limits_{{\bm{\Theta}}^{\xi, R}_{n}} \  |{\bm{a}^{T}_{\mathrm{I}}\left(\phi_{n},-\varphi_{n}\right) {\bm{\Theta}}^{\xi, R}_{n} \bm{a}_{\mathrm{I}}\left(\phi_{n},-\varphi_{n} \right)}|^2.
\end{equation}
We have ${\bm{a}^{T}_{\mathrm{I}}(\phi_{n},-\varphi_{n}) {\bm{\Theta}}^{\xi, R}_{n} \bm{a}_{\mathrm{I}}(\phi_{n},-\varphi_{n} )} = {\bm{\theta}}^{\xi, R}_{n} {\bm{a}}_{\mathrm{I}}\left(\phi_{n},-\varphi_{n}\right) \odot {\bm{a}}_{\mathrm{I}}\left(\phi_{n},-\varphi_{n}\right)$ according to $\mathbf{A}^T {\bm{X}} \mathbf{B}={\bm{x}}^T(\mathbf{A} \odot \mathbf{B})$ with ${\bm{X}} = {\rm{diag}}({\bm{x}})$, where $\odot$ denotes the Hadamard product.
The received power at the RSU is maximized when the echo reflected from the IOS is directed towards the RSU, i.e., $\theta^{R}_{(l_x-1)L_y+l_y}= - 2 \pi(l_x-1)  \cos(\varphi_{n|n-1})+ 2 \pi(l_y-1) \cos(\phi_{n|n-1})+\theta_{0}$, where {$\theta_{0}$} is the reference phase at the origin of the coordinates. Similarly, for the phase shift optimization problem of maximizing the SNR of the communication device, the optimal phase shift can be given by  $\theta^{T}_{(l_x-1)L_y+l_y}=\pi(l_y-1) (\sin ( \psi^{u,z}_{n}) \cos( \psi^{u,x}_{n})- \cos(\varphi_{n|n-1}) )+\pi(l_y-1)(\cos(\phi_{n|n-1}) + \sin ( \psi^{u,z}_{n}) \sin( \psi^{u,x}_{n}))+\theta_{0}$. Thus, the proof is completed.

\normalsize 
\section*{Appendix B: \textsc{Proof of Lemma \ref{EqualPowerSplitting}}}

According to Lemma \ref{OptimalPhaseShift}, the phase shifts of all IOS elements are aligned. It is assumed that at the optimal solution of (P1), the total reflected energy satisfies $\sum\nolimits_{l = 1}^L { {\beta _{l,n}^{\xi,R}} } = X_R$. According to Cauchy-Buniakowsky-Schwartz inequality, we have
\begin{align}\label{IOSemelemtsEquality}
	{( {\sum\nolimits_{l = 1}^L {\sqrt {\beta _{l,n}^{\xi,R}} } } )^2} \!\le\! &  \sum\nolimits_{l = 1}^L {\beta _{l,n}^{\xi,R}}  \!+\! \sum\nolimits_{l = 1}^L \sum\nolimits_{l' \ne l}^L ( {\beta _{l,n}^{\xi,R} + \beta _{l',n}^{\xi,R}} ) \nonumber \\
	=& L\sum\nolimits_{l = 1}^L {\beta _{l,n}^{\xi,R}}   = L X_R,
\end{align}
where the equality in (\ref{IOSemelemtsEquality}) holds if and only if $\beta_l^{\xi,R} = \beta_{l'}^{\xi,R}$, $\forall l, l' \in {\cal{L}}$. Similarly, $\beta_l^{\xi,T} = \beta_{l'}^{\xi,T}$. Thus, the proof is completed.

\normalsize 
\section*{Appendix C: \textsc{Proof of Proposition \ref{InftyBand}}}
\begin{figure*}[b]
	\begin{align}\label{ProbablityEquation}
		&P_\varphi(y) \!=\! \sum\limits_{k = -\infty}^{\infty} {G}{\left( {2(k + 1)\pi  -  {\arccos \left( {{{\frac{y}{2}}} + \cos \left( {{\varphi_n}} \right)} \right) \!-\! {\varphi_n}}} \right)^\prime }\!-\! {G}{\left( {2k\pi  \!+\! \arccos \left( {{\frac{y}{2}} + \cos \left( {{\varphi_n}} \right)} \right) \!-\! {\varphi_n}} \right)^\prime }  \nonumber \\
		&\!=\! \sum\limits_{k = -\infty}^{\infty} \frac{1}{{\sqrt {1 \!-\! {{\left( {{\frac{y}{2}} \!+\! \cos \left( {{\varphi_n}} \right)} \right)}^2}} }}\frac{1}{2{\sqrt {2\pi } {\sigma_{\omega_\varphi}}}}\!\left(\! {{e^{ - \frac{{{{\left( {2(k + 1)\pi  -  {\arccos \left( {{\frac{y}{2}} + \cos \left( {{\varphi_n}} \right)} \right)}  - {\varphi_n}} \right)}^2}}}{{2{\sigma^2_{\omega_\varphi}}}}}} + {e^{ - \frac{{{{\left( {2k\pi  + \arccos \left( {y + \cos \left( {{\varphi_n}} \right)} \right)} \!-\! \varphi_n \right)}^2}}}{{2{\sigma^2_{\omega_\varphi}}}}}}} \!\right).\!
	\end{align}
\end{figure*}
Before proving Proposition \ref{InftyBand}, we first introduce a Lemma to facilitate the derivation of the closed-form achievable rate.
\begin{thm}\label{FejerKernel}
	Let $\sigma_{M}(x)=\frac{1}{2 a} \int_{-a}^{a} g(x+u) \tilde F_{M}(u)(u) d u$, where $\tilde F_{M}(u) = \frac{1}{M}\left(\frac{\sin \frac{N \pi u}{2a}}{\sin \frac{\pi u}{2a}}\right)^2$. If $g(x)$ is a real valued, continuous function with period $2a$, $\sigma_M(x)$ converges uniformly to $g(x)$ when $M \to \infty$, i.e.,
	\begin{equation}
		\frac{1}{2 a} \int_{-a}^{a} g(x+u) \tilde F_{M}(u) d u-g(x) \frac{1}{2 a} \int_{-a}^{a} \tilde F_{M}(u) d u = 0.
	\end{equation}
\end{thm}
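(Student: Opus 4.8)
The plan is to recognize the stated claim as the classical Fej\'er theorem and to prove it by showing that the normalized kernel $\frac{1}{2a}\tilde F_M$ is an approximate identity on the circle of circumference $2a$. First I would verify its three defining properties. (a) Nonnegativity is immediate, since $\tilde F_M(u)=\frac1M\bigl(\frac{\sin(M\pi u/2a)}{\sin(\pi u/2a)}\bigr)^2\ge 0$. (b) The normalization $\frac{1}{2a}\int_{-a}^{a}\tilde F_M(u)\,du=1$ follows from the finite Fourier expansion $\tilde F_M(u)=\sum_{k=-(M-1)}^{M-1}\bigl(1-\tfrac{|k|}{M}\bigr)e^{j\pi k u/a}$ (the Ces\`aro mean of the Dirichlet kernels) together with $\int_{-a}^{a}e^{j\pi k u/a}\,du=2a$ when $k=0$ and $0$ otherwise. (c) Concentration near $u=0$: for any fixed $\delta\in(0,a]$ and $\delta\le|u|\le a$ one has $\sin^2(\pi u/2a)\ge\sin^2(\pi\delta/2a)=:c_\delta>0$, hence $\tilde F_M(u)\le \frac{1}{Mc_\delta}$ and $\frac{1}{2a}\int_{\delta\le|u|\le a}\tilde F_M(u)\,du\le \frac{1}{Mc_\delta}\to 0$ as $M\to\infty$.

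Next, using property (b) I would write
\[
\sigma_M(x)-g(x)=\frac{1}{2a}\int_{-a}^{a}\bigl(g(x+u)-g(x)\bigr)\tilde F_M(u)\,du,
\]
which is exactly the quantity the lemma asserts tends to zero. I would bound this by splitting the integral at $|u|=\delta$. Since $g$ is continuous and $2a$-periodic it is uniformly continuous on $\mathbb R$, so given $\varepsilon>0$ there is $\delta>0$ with $|g(x+u)-g(x)|<\varepsilon$ for all $x$ and all $|u|<\delta$; combined with (a)--(b), the inner part contributes at most $\varepsilon$. For the outer part I would use $|g(x+u)-g(x)|\le 2\max|g|$ (finite, as a continuous periodic function is bounded) and property (c) to bound it by $2\max|g|\cdot\frac{1}{2a}\int_{\delta\le|u|\le a}\tilde F_M(u)\,du$, which is below $\varepsilon$ once $M$ is large. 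Since neither bound depends on $x$, this yields $\sup_x|\sigma_M(x)-g(x)|\to 0$.

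The main obstacle, and essentially the only place care is needed, is securing \emph{uniformity} in $x$: both the choice of $\delta$ and the tail estimate must be independent of $x$. This is guaranteed precisely by the hypothesis that $g$ is continuous and $2a$-periodic, which upgrades pointwise continuity to uniform continuity and boundedness on all of $\mathbb R$, while the kernel bound in (c) is manifestly $x$-free. A minor technical point I would address is that $g(x+u)$ for $x+u$ outside $[-a,a]$ is interpreted via the periodic extension of $g$, which is automatic from the periodicity assumption. Finally, I would remark that the expansion used in step (b) also exhibits $\sigma_M$ as the Ces\`aro mean of the Fourier partial sums of $g$, which is how this lemma is subsequently invoked in the derivation of the closed-form rate.
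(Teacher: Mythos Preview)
Your argument is the standard approximate-identity proof of Fej\'er's theorem and is correct. The paper itself does not supply a proof but merely cites Theorem~3.6 of \cite{rust2013convergence}, so your write-up is in fact more self-contained than the paper's own treatment, and the classical route you take is the one that reference (and essentially any Fourier-analysis text) would follow.
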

\begin{proof}
	For a detailed proof, please refer to Theorem 3.6 in \cite{rust2013convergence}.
\end{proof}

According to Lemma \ref{FejerKernel}, we have $\lim _{M \rightarrow \infty} \frac{1}{2 a} \int_{-a}^a g(x+u) \tilde F_{M}(u) d u \rightarrow g(x)$, and thus, it follows that 
\begin{equation}\label{EquationLemmaFejer}
	\lim _{M \rightarrow \infty} \frac{1}{2 a} \int_{-a}^a g(u) \tilde F_{M}(u) d u \rightarrow g(0).
\end{equation}
Furthermore, let $y = 2\cos \left( {{\varphi_{n|n-1}}} \right) - 2\cos \left( \varphi_n  \right)$, then the probability density function (PDF) of $y$ can be given in (\ref{ProbablityEquation}), as shown at the bottom of this page, where ${G}(\cdot)$ is the Cumulative distribution function (CDF) of Gaussian function.
Define \[H(y) \!=\! \left\{\! {\begin{array}{*{20}{c}}
		\!{{F_{{M_t}}}( {\frac{y}{2}} ){F_{{M_r}}}( {\frac{y}{2}} ){P_\varphi }(y),\!}&{\!y \in [ { \!- 1 \!-\! \cos {\varphi _n},1 \!-\! \cos {\varphi _n}} ]}\\
		0&{{\rm{otherwise}}}
\end{array}} \right.\]
and $\tilde{H}(y) = \sum\nolimits_{i = -\infty}^{\infty} H(y - 2i)$, where $\tilde{H}(y)$ is real valued, continue function with period $2$. Then, it follows that
\begin{equation}
\begin{aligned}
	&\mathbb{E}\left[ F_{L_x}\left( 2\Delta \cos \varphi_n\right)F_{M_t}\left(\Delta \cos \varphi_n\right) F_{M_r}\left(\Delta \cos \varphi_n\right)\right]	\\
	= &	\int\limits_{ - 2 - 2\cos {\varphi_n}}^{2 - 2\cos {\varphi_n}} {  \left( F_{L_x}\left( y\right)F_{M_t}\left(\frac{y}{2}\right) F_{M_r}\left(\frac{y}{2}\right)\right) P_\varphi(y)dy} \\
	\stackrel{(b)} \approx & \int\limits_{ - 1 - \cos {\varphi_n}}^{1 - \cos {\varphi_n}} { \tilde{H}(y)   \tilde F_{L_x}\left( y\right)dy} \\
	\stackrel{(c)}= & 2 \tilde{H}\left(0\right) = 2 M_t M_r P_\varphi(0),
\end{aligned}
\end{equation} 
where $F_{M}(x) = \frac{1}{M}\left(\frac{\sin \frac{M \pi x}{2 }}{\sin \frac{\pi x}{2 }}\right)^{2}$ and $P_\varphi(0) = \sum\limits_{k = -\infty}^{\infty} \frac{1}{2{\sqrt {2\pi {\sigma^2_{\omega_\varphi}}}|\sin({\varphi_n})| }}\!\left(\! {{e^{ - \frac{{{{\left( {2k\pi} \right)}^2}}}{{2{\sigma^2_{\omega_\varphi}}}}}} + {e^{ - \frac{{{{\left( {2(k + 1)\pi  } - 2\varphi_n \right)}^2}}}{{2{\sigma^2_{\omega_\varphi}}}}}}} \!\right)\!$, ($b$) holds since $H(y) \approx 0$ for $y \notin [-1 - \cos(\varphi_n), 1- \cos(\varphi_n)]$, and ($c$) holds based on (\ref{EquationLemmaFejer}), i.e., let $a = 1$ in Lemma \ref{FejerKernel}. Similarly, $\mathbb{E}\left[ F_{L_y}\left( 2\Delta \cos \phi_n\right)\right] \approx  \sum\limits_{k = -\infty}^{\infty} \frac{1}{{\sqrt {2\pi {\sigma^2_{\omega_\phi}}} } |\sin({\phi_n})|}\left( {{e^{ - \frac{{{{\left( {2k\pi} \right)}^2}}}{{2{\sigma^2_{\omega_\phi}}}}}} + {e^{ - \frac{{{{\left( {2(k + 1)\pi  } - 2\phi_n \right)}^2}}}{{2{\sigma^2_{\omega_\phi}}}}}}} \right)$, and thus  the expectation of the received echo's power at the ULA along the $x$-axis can be given by 
\begin{equation}
	\gamma^{S,x}_n = \beta^{R}\eta \frac{ L  \Delta T {P^{\max}}  \beta_{G,n}^2  M_t M_r h(\varphi_n, \sigma^2_{\omega_\varphi}) h(\phi_n, \sigma^2_{\omega_\phi}) }{{\sigma_s^2} } ,
\end{equation}
where 
$h(x, y) \buildrel \Delta \over= \sum\limits_{k = -\infty}^{\infty} \frac{1}{{\sqrt {2\pi {y}} |\sin({x})|}}\!\left(\! {{e^{ - \frac{{{{2( {k\pi} )}^2}}}{{{y}}}}} \!+\! {e^{ - \frac{{{{2\left( {(k + 1)\pi  } - x \right)}^2}}}{{{y}}}}}} \!\right)\!$. Similarly, it can be readily proved that $\gamma^{S,y}_n = \gamma^{S,x}_n$.

\normalsize 
\section*{Appendix D: \textsc{Proof of Proposition \ref{ClosedFormAchievable}}}
According to Lemmas \ref{OptimalPhaseShift} and \ref{EqualPowerSplitting}, by adopting a method similar to the proof of Proposition \ref{InftyBand}, the SNR at the communication device during the S\&C phase can be approximately given by
\begin{equation}\label{AchievableRateCommunication}
	\begin{aligned}
		&\gamma^{S\&C}_n \approx \mathbb{E}\left[\frac{  \left|{\bm{h}}^T{\bm{\Theta}}^{S\&C,T}_{n}\bm{H}^{\mathrm{DL}}_{n} {\bm{f}}_{n}^{S\&C}\right|^{2}}{\sigma_{c}^{2}}\right] \\
		=& \left(1 - \beta^{R} \right)\frac{ 4P^{\max}  {\beta_{G,n}} \beta_{h}   L M_t }{\sigma_{c}^{2}} h(\varphi_n, \sigma^2_{\omega_\varphi}) h(\phi_n, \sigma^2_{\omega_\phi}).
	\end{aligned}
\end{equation}
During the communication-only phase, all the signal will be refracted towards the direction of the device for communication improvement, i.e., $\beta^{C,R} = 0$ and $\beta^{C,T} = 1$, and the corresponding SNR can be approximately given by
\begin{equation}
	\begin{aligned}
		\gamma^{C}_n \approx & \mathbb{E}\left[\frac{  \left|{\bm{h}}^T{\bm{\Theta}}^{C,T}_{n}\bm{H}^{\mathrm{DL}}_{n} {\bm{f}}_{n}^{S\&C}\right|^{2}}{\sigma_{c}^{2}}\right] \\
		=& \frac{ 4P^{\max}  {\beta_{G,n}} \beta_{h}   L M_t }{\sigma_{c}^{2}} h(\varphi_n, \sigma^2_{\tilde \varphi_n}) h(\phi_n, \sigma^2_{\tilde \varphi_n})  .
	\end{aligned}
\end{equation}
Thus, the proof is completed.

\normalsize 
\section*{Appendix E: \textsc{Proof of Proposition \ref{SensingCondition}}}
\label{EqualProblemP1}

It is not difficult to verify that with any given $\eta$, the achievable rate $\hat R$ is concave about $\beta$. Similarly, with any given $\beta^{R}$, the achievable rate $\hat R$ is concave about $\eta$. For ease of analysis, $\hat R$ is noted as a function of $\eta$ and $\beta^R$, i.e., 
\begin{equation}
	\begin{aligned}
		\hat R(\eta ,\beta^{R} )  =& \eta {\log _2}\left( {1 +  C_1 {\left(1-\beta ^R\right)}} \right) + ( {1 - \eta } ) \\
		&{\log _2}\left( {1 + D_1\sqrt {(\eta \beta ^{R} + D_2)(\eta \beta ^{R} + D_3)} } \right)		,
	\end{aligned}
\end{equation}
where $D_1 = {{C_1}\sqrt {\frac{{\sigma _{{\omega _\varphi }}^2\sigma _{{\omega _\phi }}^2}}{{{A_\varphi }{A_{\phi_n} }}}} }$, $D_2 = \frac{{{A_\varphi }}}{{\sigma _{{\omega _\varphi }}^2}}$, and $D_3 = \frac{{{A_{\phi_n} }}}{{\sigma _{{\omega _\phi }}^2}}$. If $ \left. {\frac{\partial \hat R(\eta ,\beta^{R} )}{\partial \eta}} \right|_{\eta=0} \le 0$, we have $\hat R(\eta ,\beta^{R} ) \le \hat R(0 ,\beta^{R} )$, and in this case, at the optimal solution of (P2), $\eta^* = 0$.
\begin{equation} 
	\begin{aligned}
		&\left. {\frac{\partial \hat R(\eta ,\beta^{R} )\!}{\partial \eta}} \right|_{\eta=0}\! \!= \! {\log _2}( {1 \!+\! C_1 {(1-\beta ^R)}} ) \!-\! {\log _2}( {1 \!+\! D_1\sqrt {{D_2}{D_3}} } )  \\
		&+ \frac{{D_1\left( {{D_2} + {D_3}} \right){\beta ^R}}}{{2\sqrt {{D_2}{D_3}} \left( {1 + D_1\sqrt {{D_2}{D_3}} } \right)\ln 2}} \buildrel \Delta \over = g(\beta ^R).	
	\end{aligned}
\end{equation}
It can be verified that $g(\beta ^R)$ is concave about $\beta^R$. If $g'(0) \le 0$, $g(\beta ^R) \le g(0) = 0$, i.e., $ \left. {\frac{\partial \hat R(\eta ,\beta^{R} )}{\partial \eta}} \right|_{\eta=0} \le 0$ always holds. When $g'(0) \le 0$, $\frac{{ - {C_1}}}{1+C_1} + \frac{{D_1\left( {{D_2} + {D_3}} \right)}}{{2\sqrt {{D_2}{D_3}} \left( {1 + D_1\sqrt {{D_2}{D_3}} } \right)}} \le 0$, i.e., $\frac{{\sigma _{{\omega _\phi }}^2}}{{{A_{\phi_n} }}} + \frac{{\sigma _{{\omega _\varphi }}^2}}{{{A_{\varphi_n} }}} \le 2$, in this case, $\eta^* = 0$. Otherwise, there always exist $\beta^R$ making $g(\beta ^R) > 0$, i.e., $\hat R(\eta ,\beta^{R} ) > \hat R(0 ,\beta^{R} )$ for $\eta > 0$. In this case, at the optimal solution of (P2), $\eta^* > 0$. Thus, the S\&C phase is needed if and only if $\frac{{\sigma _{{\omega _\phi }}^2}}{{{A_{\phi_n} }}} + \frac{{\sigma _{{\omega _\varphi }}^2}}{{{A_{\varphi_n} }}} > 2$.

\end{document}